\begin{document}

\newenvironment {proof}{{\noindent\bf Proof.}}{\hfill $\Box$ \medskip}

\newtheorem{defi}{Definition}[section]
\newtheorem{pro}{Proposition}[section]
\newtheorem{teo}{Theorem}[section]
\newtheorem{lemma}{Lemma}[section]
\newtheorem{coro}{Corollary}[section]
\newtheorem{rem}{Remark}[section]
\newtheorem{cond}{Condition}[section]
\newtheorem{ass}{Assumption}[section]

\renewcommand {\theequation}{\arabic{section}.\arabic{equation}}

\newcommand{\red}[1]{\textcolor{red}{#1}}

\def \non{{\nonumber}}
\def \hat{\widehat}
\def \tilde{\widetilde}
\def \bar{\overline}
\def \P{\mathbb{P}}
\def \E{\mathbb{E}}
\def \R{\mathbb{R}}
\def \Z{\mathbb{Z}}
\def \1{\mathbf 1}

\title{\large {\bf Risk Neutral Valuation of Inflation-Linked \\Interest Rate Derivatives}}
                                                       
\author {F. Antonacci$^{1}$, C. Costantini$^{2}$, F. D'Ippoliti$^{3}$, M. Papi$^{4}$}


\date{}

\maketitle

\vspace*{-1cm}

\begin{center}
$^{1}$ Università di Chieti-Pescara, Pescara (Italy)\\
Dipartimento Economics\\
e-mail: flavia.antonacci@unich.it\\[1.5mm]

$^{2}$ Università di Chieti-Pescara, Pescara (Italy)\\
Dipartimento Economics\\
e-mail: c.costantini@unich.it\\[1.5mm]

$^{3}$  RBC Capital Markets, London (UK)\\
Department of Electrical and Information Engineering\\[1.5mm]

$^{4}$ UCBM, Rome (Italy)\\
School of Engineering\\
e-mail: m.papi@unicampus.it\\[1.5mm]
\end{center}

\begin{abstract}
We propose a model for the joint evolution 
of European inflation,
the European Central Bank official interest rate and the short-term interest
rate, in a stochastic, continuous time setting. 

We derive the valuation equation for a contingent claim 
depending potentially on all three factors. This 
valuation equation reduces to a 
finite number of Cauchy problems for a degenerate parabolic 
PDE with non-local terms. We show that the price of 
the contingent claim is the 
only viscosity solution of the valuation equation. 

We also provide an efficient numerical scheme to compute the 
price and implement it in an example. 
\end{abstract}

\noindent
{\bf Keywords.} Inflation, interest rates, derivatives, risk
-neutral valuation, viscosity solution 


\noindent
{\bf JEL Classification.} C02 $\cdot$ G12 $\cdot$ C6


\noindent {\bf MSC 2010 Classification.} 91B28 $\cdot$ 91B24 $\cdot$ 35D40 


\begin{table}[b]
{\small
{\bf ------------------------------------------------------}\\
Corresponding author: Flavia Antonacci\\
Department of Economics,
Università di Chieti-Pescara,\\\ Viale Pindaro, 42 - 65127 Pescara (Italy)\\
              Ph.: +39-08-0854537938\\
              e-mail: flavia.antonacci@unich.it
}

\end{table}


\setcounter{equation}{0}
\section{Introduction}

The issuance of sovereign bonds linked to euro area
inflation began with the introduction of bonds indexed to the French
consumer price index (CPI) excluding tobacco (Obligations
Assimilables du Tr\'{e}sor index\'{e}es or OATis) in 1998. In 2003,
Greece, Italy and Germany decided to issue inflation linked 
bonds too. Despite that, the inflation linked derivatives market is only on
its infancy. Some typical examples are \textit{inflation 
caps}, which
pay out if the inflation exceeds a certain threshold over a given
period, or \textit{inflation protected annuities} that guarantee a
real rate of return 
at or above inflation (for a list of inflation
derivatives, see e.g. Hughston~\cite{Hughston1998}).

The pricing of inflation linked derivatives is related to both interest
rate and foreign exchange theory. In their seminal work 
of 2003, Jarrow and
Yildirim~\cite{JarrowYildirim2003} proposed an approach based
on foreign currency and interest rate derivatives valuation. On the
other hand, there is some empirical and theoretical evidence
that bond prices, inflation, interest rates, monetary policy and
output growth are related. In particular, both inflation 
and interest rates 
are clearly related to the activity of central banks. 

In the present work we propose a model for the joint 
evolution of European inflation, the European Central Bank
(henceforth ECB) official interest rate and the short-term
interest rate, and use it to price European type 
derivatives whose payoff depends potentially on all three factors. 
To the best of our knowledge, ours is the 
first model that takes into account the interaction 
among all these three factors. 
With the 2007-2008 financial crisis it has become clear 
that there is another risk factor underlying bond prices, 
namely credit risk, but we leave the construction 
of a model that incorporates this factor for 
future work. 

Our model is a stochastic, continuous time one. 
More precisely, the ECB interest rate evolves as a pure jump process 
with intensity that depends both on its current value 
and on the current value of inflation. See Section 
\ref{subsecECB} for more detail.
Inflation is modeled as a piecewise constant process that jumps 
at fixed times $t_i$: This reflects the fact that inflation is 
measured at regular times. 
The new value at $t_i$ is given 
by a Gaussian random variable with expectation depending 
on the previous value of inflation and on the current 
values of both the ECB and short-term interest rates. 
Gaussian distributions for real and nominal interest rates 
are employed, for instance, in Mercurio \cite{Mercurio2005}. 
Finally, the short-term interest rate follows a CIR type model 
with reversion towards an affine function of the ECB 
interest rate and diffusion coefficient depending on the spread 
between itself and the ECB interest rate. 

Many models proposed to price inflation-indexed
derivatives fall in the class of affine models (see e.g. 
Ho, Huang and Yildirim \cite{HoHuangYildirim} and 
Waldenberger \cite{Waldenberger}). Singor et al. 
\cite{Singorealtri} consider a Heston-type inflation 
model in combination with a Hull-White model for 
interest rates, with non-zero correlations. Hughston and 
Macrina \cite{HughstonMacrina} propose a discrete 
time model based on utility functions. Haubric et al. 
\cite{HaubricPennacchiRitchken} develop a discrete time model of 
nominal and real bond yield curves based on several 
stochastic drivers. 

Our model does not fall within the class of affine models, 
and does not reduce to other known models. 
Therefore a certain amount of mathematical work is 
needed to study it. 
In particular we have to prove first of all that it is well posed, i.e. 
that there is one and only one triple of stochastic 
processes that satisfies the above description. 
We then derive the valuation equation for the price of 
a derivative with a continuous payoff with sublinear growth. 
The valuation equation on a time horizon $[0,T]$ 
reduces to a series of terminal value problems 
on the time intervals $[t_i,t_{i+1})$ (the time intervals on 
which inflation is constant) for an equation that can 
be seen as a simple parabolic 
Partial Integro-Differential Equation. The equation is the 
same for all intervals, but the terminal value is 
different on each interval and is defined by a backward 
recursion: On the interval $[t_i,t_{i+1})$ the terminal value depends on 
the solution on the interval $[t_{i+1},t_{i+2})$. The equation is 
not uniformly parabolic because 
the second order coefficient is not bounded away from 
zero, therefore it is not obvious that a classical solution 
exists. However we are able to prove existence and 
uniqueness of the viscosity solution by applying a result of Costantini, Papi 
and D'Ippoliti \cite{CostantiniPapiD'Ippoliti2012}. 

Although in some special cases the price might admit a closed 
form, or might be approximated by a closed form, in 
general it has to be computed numerically. This can be 
done very easily and efficiently by the  
semi-implicit, second order, finite difference scheme that 
we propose in Section \ref{sectionTheImplementation}, 
modifying the scheme proposed in Zhu and Li 
\cite{ZhuLi2003}. One reason the scheme is quite 
accurate is that it does not impose an artificial 
boundary condition at the boundary where the short 
term interest rate is zero. 

The paper is organized as follows. In Section~\ref{sectionTheModel},
we introduce the mathematical model and check that it is well posed. 
InSection~\ref{sectionVE} we derive the valuation
equation and prove that the price is the only solution 
in the viscosity sense. 
In Section~\ref{sectionTheImplementation} we describe 
the numerical scheme and compute numerically the price 
in an example. 


\section{The model}\label{sectionTheModel}
\begin{figure}[htp]
\begin{center}
    \includegraphics[scale=0.7]{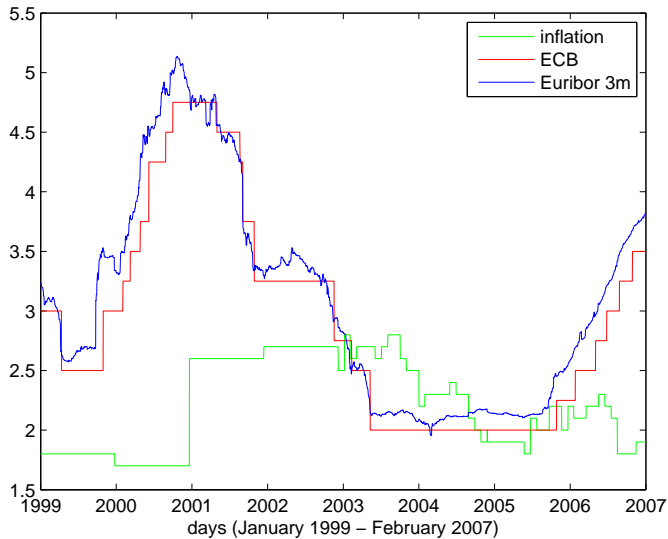}
    \caption[]{{\small{\em Evolution of European Inflation, ECB and short-term interest rates in the period
        January 1999- February 2007 (daily data).}}}
\label{figmodel}
\end{center}
\end{figure}

Figure~\ref{figmodel} plots the ECB interest rate together with
the inflation rate and the short-term interest rate. 
The primary goal of the ECB is to maintain price stability,
i.e., to keep inflation within a desired range (close to 2\%). The
inflation target is achieved through periodic adjustments of the ECB
official interest rate and, consequently, of short-term
interest rates. This behaviour is the core of this
work: In this section we formulate a dynamical model  
that describes the relationship among inflation, 
the ECB and the short-term interest rates, under a 
martingale measure. 

From now on, we consider $t\in [0,T]$ with $T<+\infty$, and we
fix the probability space $\left(\Omega ,\mathcal{F},\mathbb{P}\right
)$.


\subsection{European Inflation}\label{subsecInflation}
The value of the European rate of inflation is officially made known
once a month, hence we model it as a stochastic process that jumps
at fixed times, with jump sizes depending on the previous value of
the inflation and on the spread between the official ECB interest
rate and the short-term
rate, and it is constant between two jumps. \\
Specifically, using the usual convention that one year is an
interval of length one, let $\mathcal{T}:=\{t_i\}_{i\geq 0, \ldots,N}$, where $t_i=i\Theta,$
(with $\Theta=\frac{1}{12}$) be the sequence of times at which the values of
the inflation process, $\left\{\Pi(t_{i})\right\}_{t_i \in \mathcal{T}}$, are
observed.\\
The evolution is then given by
\begin{equation}\label{dynamicsInfl}
        \left\{\begin{array}{ll}
            \Pi(0)=\Pi_{0},\\
            \Pi(t)=\Pi(t_{i}),& \quad t_i\leq t < t_{i+1},\\
            \Pi(t_{i+1})=\gamma\left(\Pi(t_{i}),R(t_{i+1}^-),R^{sh}(t_{i+1}^-)\right)+\epsilon_{i+1},
                & \quad t=t_{i+1},
                \end{array}
        \right.
\end{equation}
where $\gamma$ is a linear function defined by
\begin{equation}\label{gamma}
    \gamma(\pi,r,z)=\alpha\pi+k^{\Pi}(\pi^*-\pi)+\beta
            \left(r-z\right),       
\end{equation}
with $\alpha,~\beta \in \mathbb{R}$ and $k^{\Pi},~\pi^* \in
\mathbb{R}_+$ constant parameters such that $0<\alpha-k^{\Pi}<1$.\\
The fluctuations $\{\epsilon_{i}\}_{i=0, \ldots,M}$ are i.i.d. random variables
distributed according to the $\mathcal{N}(0,v^{2})$ law, and
$R(t)$ and $R^{sh}(t)$, for ${t\in [0,T]}$, are the
interest rate processes which will be introduced in
Sections~\ref{subsecECB} and \ref{subsecShort}.\\
We can see that 
\begin{equation}\label{g}
\gamma(\pi,r,z)=\pi+\left(k^{\Pi}-\alpha+1\right)\left[\frac{k^{\Pi}\pi^*+\beta
            \left(r-z\right)}{k^{\Pi}-\alpha+1}-\pi)\right]
\end{equation}
and hence the condition $0<\alpha-k^{\Pi}<1$ yields that the process
$\Pi$  satisfies the mean-reversion property towards $\frac{k^{\Pi}\pi^*+\beta\left(r-z\right)}{k^{\Pi}-\alpha+1}$.


\subsection{European Central Bank Interest Rate}\label{subsecECB}

Looking at figure \ref{figmodel},
we can see some important facts about the ECB interest rate.
The level of the rate is persistent, hence 
the sample path is a step function; The changes 
are multiples of 25 basis points (bp); A change is 
often followed by additional changes, frequently in the 
same direction  (especially in the last years). 
Therefore we can model the ECB 
interest rate, $R(t)$, as a continuous time, 
pure jump process with finitely many possible upward and downward jump values. 
These jumps occur at random times $\{\vartheta_i\}$, and their size 
is equal to $k\delta$ with $\delta =0.0025$ and $k\in \{-m,...,-1
,1,...,m\}$.
When the level of
the official interest rate is low, there is a tendency to avoid
further downward jumps, or, at least, to reduce the occurrence of
this type of jumps. To take into account this effect, we suppose that the jump
intensity is a function $\lambda=\lambda(\pi,r)$ of the current values of the inflation
process and of the ECB interest rate and the probability of occurence of 
a jump $k\delta$ also depends upon the current values of inflation and the ECB 
interest rate, i.e. it is a function $p(\pi ,r,k\delta )$.\\
Since, by definition, an interest rate is always larger than -1, we can assume,without loss 
of generality, $R(t)>\underline r, \quad \underline r\geq -1$ , for all $t>0$.
In addiction, we suppose that there exists a maximum value $0<\overline r<+\infty$ such that 
$R(t)<\overline r,$ for all $t>0.$
Consistently we assume that 
\begin{equation}\label{roverandunderline}p(\pi ,r,k\delta )=0\qquad\mbox{\rm for }
r+k\delta\notin (\underline r,\overline r),\end{equation}
and that 
\begin{equation}\label{overlineLambda}
    \overline{\lambda}:=\sup_{(\pi,r) \in
    \mathbb{R}\times (\underline{r},\overline{r})}\lambda(\pi,r))<+\infty,
\end{equation}
Of course we suppose that
\[\sum_{k\in \{-m,...,-1,1,...,m\}} p(\pi ,r,k\delta )=1 \qquad \forall \pi\in \mathbb{R},\, r\in [\underline r,\overline r]\]
Moreover in view of Section \ref{sectionVE}, we will make the following additional 
assumptions on 
$p(\cdot ,\cdot ,k\delta )$ and $\lambda =\lambda (\cdot ,\cdot )$: For $
k=1,...,m$, 
\begin{equation}\label{continuitapelambda}p(\cdot ,\cdot ,k\delta 
)\mbox{\rm \ and }\lambda =\lambda (\cdot ,\cdot )\mbox{\rm \ are continuous on }\mathbb{
R}\times [\underline r,\overline r]\end{equation}
and
\begin{equation}\label{derivatefinitepelambda}\begin{array}{c}
p(\pi ,\cdot ,k\delta )\mbox{\rm \ has a finite left derivative at }\overline 
r-k\delta\\
p(\pi ,\cdot ,-k\delta )\mbox{\rm \ has a finite right derivative at }\underline 
r+k\delta\end{array}
\end{equation}
The value $\overline {\lambda}$
can be considered as the intensity of a Poisson process $N=N(t)$ and, 
using this fact, henceforth, we will consider the ECB interest rate as the solution 
of the following stochastic equation 
\begin{equation}\label{dynamicsECB}R(t)=R_0+\int_0^tJ\left(\Pi (s^{
-}),R(s^{-}),U_{N(s^{-})+1}\right)~dN(s),\end{equation}
where $\{U_n\}_{n \geq 0}$ are i.i.d. $[0,1]$-uniform random
variables, independent of $N$,
\begin{equation}\label{J}\begin{array}{lll}
J(\pi ,r,u)&:=&-m\delta\mathbf{1}_{(0,1]}(q(\pi ,r,-m\delta ))
\mathbf{1}_{[0,q(\pi ,r,-m\delta )]}(u)\\
&&+\sum_{k=-m+1}^mk\delta~\mathbf{1}_{(0,1]}(q(\pi ,r,k\delta ))\mathbf{
1}_{(\sum_{h=-m}^{k-1}q(\pi ,r,h\delta ),\sum_{h=-m}^kq(\pi ,r,h\delta 
)]}(u),\end{array}
\end{equation}
for $u\in [0,1]$, and
\begin{equation}\label{probabilities}
    \left\{
    \begin{array}{ll}
     q(\pi,r,k\delta):=\frac{p(\pi,r,k\lambda)\lambda(\pi,r)}{\overline{\lambda}},\qquad \quad \mbox{\rm if } k\neq 0\\
        q(\pi,r,k\delta):=1-\frac{\lambda(\pi,r)}{\overline{\lambda}}\qquad \qquad \mbox{\rm if } k=0.\\
    \end{array}
    \right.
\end{equation}


\subsection{Short-term Interest Rate}\label{subsecShort} 

Consistently with empirical
observations, we model the evolution of the short-term interest rate, $
R^{sh}$, 
as a mean-reverting Ito process with coefficients 
depending on the ECB interest rate, $R$, and hence  
indirectly on the inflation $\Pi$ as well. 
More precisely, we suppose that $R^{sh}$ satisfies 
the following equation
\begin{equation}\label{dynamicsShort}\left\{\begin{array}{ll}
dR^{sh}(t)=k^{sh}\left(b(R(t))-R^{sh}(t)\right)dt+\overline {\sigma}\left
(|R(t)-R^{sh}(t)|^2\right)\sqrt {|R^{sh}(t)|}dW(t),\\
R^{sh}(0)=R^{sh}_0,\end{array}
\right.\end{equation}
where $k^{sh} \in \mathbb{R}_+$ is a constant parameter and
$\{W_{t}\}_{t \in [0,T]}$ is a standard Wiener process. The function
$b(r)$ is defined by
\begin{equation}\label{b}b(r)=b_0+b_1r,\end{equation}
where $b_0,b_1\in\mathbb{R}$ are constant parameters and 
$\inf_{(\underline r,\overline r)}b(r)>0.$
The volatility coefficient $\overline {\sigma}$ is allowed to depend on the spread 
between $R^{sh}$ and $R$, so as to model the fact that higer values of the spread 
may lead to higher volatility of $R^{sh}$.
Moreover $\overline {\sigma}$ is nonnegative, 
and, in view 
of Section \ref{sectionVE}, the square of $\overline {\sigma}$, $\overline {
\sigma}^2$, satisfies the 
following assumptions:
\begin{equation}\label{sigmabarproperty1}\overline {\sigma}^2\in 
{\cal C}^2([0,\infty )),\end{equation}
\begin{equation}\label{sigmabarproperty2}\overline {\sigma}^2(q)\leq
\sigma_1(1+\sqrt q)\\
,\,\,\,q\in [0,+\infty )\\
.\end{equation}
For instance one can take 
\[\overline {\sigma}(q)=\sigma_0\frac q{1+q}.\]
Moreover, in analogy to the CIR model, we assume that
\begin{equation}\label{bandsigmaproperty}k^{sh}\inf_{(\underline 
r,\overline r)}b(r)\geq\frac 12\overline {\sigma}^2\left(\overline 
r-(\underline r\wedge 0)\right).\end{equation}


\subsection{Well-posedness of the model}\label{subsecwellpos}
We suppose that the sources of randomness in (\ref{dynamicsInfl}),
(\ref{dynamicsECB}) and (\ref{dynamicsShort}), that is
$\{\epsilon_i\}_{1\leq i\leq M}$, $\{N(t)\}_{t\geq 0}$, $\{U_n\}_{
n\geq 1}$ and $\{W(t)\}_{t\geq 0}$, are mutually
independent. All information is given by the following filtration
\begin{equation}\label{filtrationbig}\mathcal{F}_t:=\sigma\left(\left
\{\Pi_0,R_0,R^{sh}_0,\epsilon_{I(s)},N(s),W(s),U_{N(s)},s\leq t\right
\}\right),\end{equation}
where $I(s)$ is the number of jumps of inflation up to time $s$,
namely,
\begin{equation}\label{I}
    I(s):=\max\{i \geq 0:t_i \leq s\}, s\geq 0.
\end{equation}
The model introduced in Sections
\ref{subsecInflation}-\ref{subsecShort} is well posed, in the sense
that there exists one and only one stochastic process
$\left(\Pi,R,R^{sh}\right)$ verifying
(\ref{dynamicsInfl}), (\ref{dynamicsECB}) and (\ref{dynamicsShort})
in a strong sense, as we prove in the following theorem.
\begin{teo}\label{theowellpos}
For every triple of $\mathbb{R}\times (\underline r,\overline r)\times 
(0,+\infty )$-valued r.v.'s $\left(\Pi_0,R_0,R^{sh}_0\right)$, 
there exists one and only one stochastic process
    $(\Pi,R,R^{sh})$ defined on $(\Omega ,\mathcal{F},\mathbb{
P})$, $\{\mathcal{F}_t\}$-adapted, 
such that $(\ref{dynamicsInfl})$, $(\ref{dynamicsECB})$ and 
$(\ref{dynamicsShort})$ 
are $\mathbb{P}$-a.s. verified. 
It holds $R^{sh}(t)>0$ for all $t\geq 0$, almost surely.
\end{teo}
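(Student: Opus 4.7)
The plan is to construct the unique solution $(\Pi,R,R^{sh})$ pathwise by induction along two nested sequences of stopping times: the deterministic grid $\{t_i\}$ on which $\Pi$ can jump, and, within each interval $[t_i,t_{i+1})$, the random jump times $\{\vartheta_n\}$ of the driving Poisson process $N$. Between two consecutive such times both $\Pi$ and $R$ are constant, so on each subinterval only $R^{sh}$ genuinely evolves and it satisfies an autonomous one-dimensional SDE with the frozen value $R\equiv r$ entering the coefficients.

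Step one is therefore to prove that, for each fixed $r\in(\underline r,\bar r)$ and each deterministic positive initial datum, the SDE
\[ dX_t = k^{sh}(b(r) - X_t)\,dt + \bar\sigma(|r - X_t|^2)\sqrt{|X_t|}\,dW_t \]
admits a unique strong solution. The drift is affine (Lipschitz). The diffusion coefficient factors as a locally $C^1$ function of $X_t$ (guaranteed by \eqref{sigmabarproperty1}) times $\sqrt{|X_t|}$, which is $1/2$-Hölder, so the Yamada--Watanabe criterion delivers pathwise uniqueness. Existence in the strong sense follows in the standard way by truncation and tightness, with the sublinear-growth control \eqref{sigmabarproperty2} preventing explosion.

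Step two is strict positivity, $X_t>0$ for all $t$ almost surely. This is where assumption \eqref{bandsigmaproperty} enters: as long as $X_t$ stays in a compact set inside $[0,\bar r]$, one has $|r-X_t|^2 \le (\bar r - (\underline r\wedge 0))^2$, so the effective squared diffusion $\bar\sigma^2(|r-X_t|^2)\,X_t$ is dominated by $\bar\sigma^2(\bar r-(\underline r\wedge 0))\,X_t$. Thus $X$ can be compared from below to a classical CIR process whose parameters satisfy Feller's condition; equivalently, a Lyapunov argument with $V(x)=-\log x$ on a neighbourhood of $0$ shows that $0$ is an unattainable boundary, so $\P(\inf_{s\le T}X_s=0)=0$.

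Step three pieces the local constructions together. Starting from the $\mathcal{F}_0$-measurable datum $(\Pi_0,R_0,R^{sh}_0)$, one defines $R^{sh}$ on $[0,\vartheta_1\wedge t_1)$ via the SDE with frozen $R=R_0$ and $\Pi=\Pi_0$; at a Poisson jump one sets $R(\vartheta_n)=R(\vartheta_n^-)+J(\Pi(\vartheta_n^-),R(\vartheta_n^-),U_n)$, which by \eqref{roverandunderline} keeps $R$ in $(\underline r,\bar r)$; at an inflation observation time $t_{i+1}$ one updates $\Pi$ using the independent innovation $\epsilon_{i+1}$ via \eqref{dynamicsInfl}. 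Each update is $\mathcal{F}_t$-measurable and uses inputs independent of $\mathcal{F}_{\vartheta_n^-}$ or $\mathcal{F}_{t_{i+1}^-}$, so the resulting triple is $\{\mathcal{F}_t\}$-adapted and, by the uniqueness established in step one, uniquely determined. Induction over the successive stopping times yields the solution on $[0,T]$. The main obstacle is the positivity of $R^{sh}$: the modulation of the CIR diffusion by $\bar\sigma(|r-X_t|^2)$ prevents a direct application of the classical Feller test, and the comparison must be justified carefully since the modulated volatility is not in general monotone in $X_t$; assumption \eqref{bandsigmaproperty} is precisely the uniform bound that reduces the problem to the classical case.
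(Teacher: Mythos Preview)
Your proposal is essentially correct and follows the same architecture as the paper's proof: induction over the deterministic grid $\{t_i\}$, then within each $[t_i,t_{i+1})$ over the Poisson jump times $\{\vartheta_n\}$, so that on each subinterval only $R^{sh}$ evolves, via a one-dimensional SDE with frozen $r$; strong existence and pathwise uniqueness for that SDE come from the Yamada--Watanabe/Ikeda--Watanabe criterion (local $\tfrac12$-H\"older diffusion, Lipschitz drift, sublinear growth).

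The one point where you diverge from the paper is the positivity argument. Your primary suggestion, comparison from below with a classical CIR process, does not go through: standard one-dimensional comparison theorems require the two SDEs to share the same diffusion coefficient, whereas here the modulation by $\bar\sigma(|r-\cdot|^2)$ changes it. You flag this yourself, and your fallback---a Lyapunov argument with $-\log z$ near $0$---is exactly what the paper does. The paper takes $V_1(z)=z^2-\ln z$, which simultaneously blows up at $0$ and at $+\infty$; the key computation is that for $0<z\le\bar r$ assumption \eqref{bandsigmaproperty} makes the term $\tfrac1z\big(\tfrac12\bar\sigma^2(|r-z|^2)-k^{sh}b(r)\big)$ nonpositive, while for $z>\bar r$ the $\tfrac1z$ factor is harmless and \eqref{sigmabarproperty2} controls the rest, yielding $L V_1(z)\le C(1+z^2)\le C(1+V_1(z))$. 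Gronwall then gives $\ln(n)\,\P(\alpha_n\le t)\le C_t$ for $\alpha_n=\inf\{t:R^{sh}(t)\le 1/n\}$, hence $R^{sh}>0$ a.s. So: drop the comparison idea and run the Lyapunov computation directly; otherwise your outline matches the paper.
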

\begin{proof}
Setting $t_0:=0$ and $(\Pi (0),R(0),R^{sh}(0)):=\left(\Pi_0,R_0,R^{
sh}_0\right)$, we 
claim that, given a triple 
of $\mathbb{R}\times [\underline r,\overline r]\times\R$-valued, $
\{\mathcal{F}_{t_i}\}$-measurable r.v.'s 
$\left(\Pi (t_i),R(t_i),R^{sh}(t_i)\right)$, $(\Pi ,R,R^{sh})$ is pathwise uniquely 
defined on the interval $[t_i,t_{i+1}]$ and 
$\left(\Pi (t_{i+1}),R(t_{i+1}),R^{sh}(t_{i+1})\right)$ is $\{\mathcal{
F}_{t_{i+1}}\}$-measurable. 
To see this, observe, first of all, the probability that $N$ jumps at any of $
t_1,...,t_M$ is zero. 
Therefore, denoting by $\{\vartheta_n\}$ the jump times of 
$N$, $R$ can be defined simply in the following way: 
If $N(t_{i+1})>N(t_i)$, 
\begin{eqnarray*}
R(\vartheta_{N(t_i)+n}):=R(\vartheta_{N(t_i)+n-1}\vee t_i)+J\left(\Pi (t_
i),R(\vartheta_{N(t_i)+n-1}\vee t_i),U_{N(t_i)+n}\right),\\
\mbox{\rm \ for } 1\leq n
\leq N(t_{i+1})-N(t_i),
\end{eqnarray*}
\[R(t):=R(\vartheta_{N(t_i)+n-1}\vee t_i)\quad\mbox{\rm \ for }\vartheta_{N(t_i)
+n-1}\vee t_i\leq t<\vartheta_{N(t_i)+n},\quad 1\leq n\leq N(t_{i+1})-
N(t_i),\]
\[R(t):=R(\vartheta_{N(t_{i+1})})\qquad \mbox{\rm \ for }\vartheta_{N(t_{i+1})}\leq 
t\leq t_{i+1}.\]
If $N(t_{i+1})=N(t_i)$,  
\[R(t):=R(t_i),\qquad \mbox{\rm \ for }t_i\leq t\leq t_{i+1}.\]
Note that $R(\vartheta_{N(t_i)+n}\wedge t_{i+1})$ is 
$\{\mathcal{F}_{\vartheta_{N(t_i)+n}\wedge t_{i+1}}\}$-measurable for all $
n\geq 1$, 
and that $R(t_{i+1})$ is $\{\mathcal{F}_{t_{i+1}}\}$-measurable.
Denote $\vartheta^i_0:=t_i$, $\vartheta^i_n:=\vartheta_{N(t_i)+n}\wedge t_{i+1}$, $
n\geq 1$. In each subinterval 
$[\vartheta^i_{n-1},\vartheta^i_n]$, we can write the equation 
$(\ref{dynamicsShort})$ as 
\begin{eqnarray}
R^{sh}(\vartheta^i_{n-1}+t)=R^{sh}(\vartheta^i_{n-1})+\int_0^tk^{sh}\left(
b(R(\vartheta^i_{n-1}))-R^{sh}(\vartheta^i_{n-1}+s)\right)ds\\
+\int_0^t\overline {\sigma}\left(|R(\vartheta^i_{n-1})-R^{sh}(\vartheta^i_{
n-1}+s)|^2\right)\sqrt {|R^{sh}(\vartheta^i_{n-1}+s)|}dW^{\vartheta^i_{n-1}}
(s)\\ 
0\leq t\leq\vartheta^i_n-\vartheta^i_{n-1},
\label{IkedaWatanabe}\end{eqnarray}
where $W^{\vartheta^i_{n-1}}(s):=W(\vartheta^i_{n-1}+s)-W(\vartheta^i_{n-1})$. 
Since $W$ is independent of $N$, $W^{\vartheta^i_{n-1}}$ is a standard 
Brownian motion, independent of $\vartheta^i_n-\vartheta^i_{n-1}$. Moreover, if 
$R^{sh}(\vartheta^i_{n-1})$ is $\{\mathcal{F}_{\vartheta^i_{n-1}}\}$-measurable (hence
$\big(R(\vartheta^i_{n-1}),R^{sh}(\vartheta^i_{n-1})\big)$ is $\{\mathcal{F}_{
\vartheta^i_{n-1}}\}$-measurable) $W^{\vartheta^i_{n-1}}$ 
is independent of $\big(R(\vartheta^i_{n-1}),R^{sh}(\vartheta^i_{n-1})\big)$. 
The diffusion coefficient in 
$(\ref{IkedaWatanabe})$ is locally Holder continuous by  
$(\ref{sigmabarproperty1})$, and has sublinear growth by 
$(\ref{sigmabarproperty1})$. Therefore, by the Corollary 
to Theorem 3.2, Chapter 4, of \cite{IkedaWatanabe}, there exists one and only one 
strong solution to $(\ref{IkedaWatanabe})$ 
(the Corollary to Theorem 3.2 of \cite{IkedaWatanabe} assumes global Holder 
continuity, but, as pointed out in the comment 
immediately preceding Theorem 3.2, its statement can be 
localized and it yields existence and uniqueness of the 
strong solution up to the explosion time; $(\ref{sigmabarproperty1})$ and 
Theorem 2.4, Chapter 4, of \cite{IkedaWatanabe} ensure that the 
explosion time is infinite). 
Then $R^{sh}(\vartheta^i_{n-1}+t)$ is pathwise uniquely defined for 
$0\leq t\leq\vartheta^i_n-\vartheta^i_{n-1}$ and $R^{sh}(\vartheta^i_n)$ is $\{\mathcal{
F}_{\vartheta^i_n}\}$-measurable. 
Since $R^{sh}(\vartheta^i_0)=R^{sh}(t_i)$ is $\{\mathcal{F}_{t_i}\}$-measurable, i.e. 
$\{\mathcal{F}_{\vartheta^i_0}\}$-measurable, we see, by induction, that 
$R^{sh}$ is pathwise uniquely defined on $[t_i,t_{i+1}]$ and 
$R^{sh}(t_{i+1})$ is $\{\mathcal{F}_{t_{i+1}}\}$-measurable. 
By setting  
\[\Pi (t_{i+1})=\gamma\left(\Pi (t_i),R(t_{i+1}),R^{sh}(t_{i+1})\right
)+\epsilon_{i+1},\]
our claim is proved. 
Finally, let us show that $R^{sh}(t)>0$ for all $t\geq 0$, almost 
surely. Let $\alpha_n:=\inf\{t\geq 0:\,R^{sh}(t)\leq\frac 1n\}$. Then it will be 
enough to show, for every $z_0>0$, for $R^{sh}_0=z_0$, that
\[\P (\alpha_n\leq t)\rightarrow_{n\rightarrow\infty}0,\qquad\forall 
t>0.\]
To see this, consider, for $z>0$, the function 
\[V_1(z):=z^2-\ln(z).\]
We have 
\[\begin{array}{llccc}
&k^{sh}\left(b(r)-z\right)V_1(z)'+\frac 12\overline {\sigma}^2\left
(|r-z|^2\right)V_1(z)^{\prime\prime}\\
&=k^{sh}\left(b(r)-z\right)\big(2z-\frac 1z\big)+\frac 12\overline {
\sigma}^2\left(|r-z|^2\right)\big(2z+\frac 1z\big)\\
&=\1_{\{z\leq\overline r\}}\frac 1z\bigg(\frac 12\overline {\sigma}^
2\left(|r-z|^2\right)-k^{sh}b(r)\bigg)+\1_{\{z>\overline r\}}\frac 
1z\bigg(\frac 12\overline {\sigma}^2\left(|r-z|^2\right)-k^{sh}b(
r)\bigg)\\
&\qquad\qquad\qquad\qquad +k^{sh}b(r)+2z\bigg(\frac 12\overline {
\sigma}^2\left(|r-z|^2\right)+k^{sh}b(r)-k^{sh}z\bigg)\\
&\leq\frac 1{2\overline r}\overline {\sigma}^2\left(|r-z|^2\right
)+k^{sh}b(r)+2z\bigg(\frac 12\overline {\sigma}^2\left(|r-z|^2\right
)+k^{sh}b(r)\bigg)\\
&\leq C(1+z^2),\end{array}
\]
where the last but one inequality follows from 
$(\ref{bandsigmaproperty})$ and $z>0$, 
and the last one follows from 
$(\ref{sigmabarproperty2})$. 
Let $\beta_k:=\inf\{t\geq 0:\,R^{sh}(t)\geq k\}$. By applying Ito's 
formula and taking expectations, we obtain 
\begin{eqnarray*}
&&\E[V_1(R^{sh}(t\wedge\alpha_n\wedge\beta_k)]\\
&\leq&V_1(z_0)+C\E\bigg[\int_0^{t\wedge\alpha_n\wedge\beta_k}\big
(1+R^{sh}(s)^2\big)ds\\
&\leq&V_1(z_0)+C\int_0^t\bigg(1+\E\bigg[V_1(R^{sh}(s\wedge\alpha_
n\wedge\beta_k)\bigg]\bigg)ds,\end{eqnarray*}
which implies, by Gronwall's Lemma and by taking limits 
as $k\rightarrow\infty$,  
\[\E[V_1(R^{sh}(t\wedge\alpha_n)]\leq\big(V_1(z_0)+Ct\big)e^{Ct},\]
and hence, 
\[\ln(n)\P (\alpha_n\leq t)\leq\big(V_1(z_0)+Ct\big)e^{Ct}.\]
\end{proof}


\section{The Valuation Equation}\label{sectionVE}

In this section, we derive the valuation equation for 
a contingent claim with maturity $T$ and
payoff $\Phi (\Pi (T),R(T),R^{sh}(T))$. 
As it is well known, under a risk neutral measure, 
the price $P(t)$ of such a contingent claim can
be expressed as the expected discounted payoff, namely, 
 
\[P(t)=\E\bigg[\exp\bigg(-\int_t^TR^{sh}(s)ds\bigg)\Phi (\Pi (T),
R(T),R^{sh}(T))\bigg|{\cal F}_t\bigg].\]
Therefore, due to the Markov property of $(\Pi ,R,R^{sh})$, 
\[P(t)=\varphi (t,\Pi (t),R(t),R^{sh}(t)),\]
where 
\begin{eqnarray}&&\!\!\!\!\!\!\!\!\varphi (t,\pi ,r,z):=\nonumber\\
&&\!\!\!\!\!\!\!\!\!\!\!\!\E\bigg[\exp\bigg(-\int_t^T
R^{sh}(s)ds\bigg)\Phi (\Pi (T),R(T),R^{sh}(T))\bigg|(\Pi (t),R(t)
,R^{sh}(t))=(\pi ,r,z)\bigg].\qquad\label{price}\end{eqnarray}

We assume that the payoff is continuos and satisfies 
\begin{equation}|\Phi (\pi ,r,z)|\leq C_0(1+|\pi |+z),\quad\pi\in\R
,\,r\in\underline r,\overline r),\,\,\,z>0,\label{payoff}\end{equation}
for some positive constant $C_0$. 
Supposing, for simplicity, that $T=t_M$ (the last time 
inflation is measured), we are going to show that 
\begin{equation}\varphi (t,\pi ,r,z)=\varphi^i(t,\pi 
,r,z),\quad t_i\leq t<t_{i+1},~i=0,\ldots ,M-1.\label{varphi1}\end{equation}
where, for each value of inflation, $\pi ,$ $\varphi^i(t_i+\cdot 
,\pi ,\cdot ,\cdot )$ is the unique 
solution of the terminal value problem on 
$[0,\Theta ]$ for an equation that can be viewed as a simple 
parabolic Partial 
Integro-Differential Equation (with coefficients 
depending on the parameter $\pi$). The equation is the same 
for all $i$'s, but the terminal value  
is different for each $i$ and is defined recursively from 
$\varphi^{i+1}(t_{i+1},\cdot ,\cdot ,\cdot )$, for $i=0,...,M-2$, and from $
\Phi$ for $i=M-1$ 
(see Lemma \ref{recursion} below). 
As it will be seen in the next section, this series of 
parametric terminal value problems can be easily solved 
numerically. 

To carry out our program, we need to introduce the 
parametrized process $(R^{\pi},R^{sh,\pi})$ defined by 
the following system of stochastic equations 
\begin{equation}\left\{\begin{array}{cc}
R^{\pi}(t)=R^{\pi}_0+\int_0^tJ(\pi ,R^{\pi}(s^{-}),U_{N(s^{-})+1}
)dN(s), \label{param}\\
R^{sh,\pi}(t)=R^{sh,\pi}_0+\int_0^tk^{sh}\left(b(R^{\pi}(s))-R^{s
h,\pi}(s)\right)ds\\
+\int_0^t\overline {\sigma}\left(|R^{\pi}(s)-R^{
sh,\pi}(s)|^2\right)\sqrt {|R^{sh,\pi}(s)|}dW(s).\end{array}\right.
\end{equation}

\begin{pro}\label{param-wdef}

Let $(N,\{U_n\},W)$  be as in Theorem \ref{theowellpos}. 
For each $\pi\in\R$, for each $(\underline r,\overline r)\times (
0,\infty )$-valued random 
variable $(R^{\pi}_0,R^{sh,\pi}_0)$, independent of $(N,\{U_n\},W
)$, there exists 
a unique strong solution of the system of 
equations $(\ref{param})$. 

It holds $R^{\pi}(t)\in (\underline r,\overline r)$, $R^{sh,\pi}(
t)>0$ for all $t\geq 0.$ 

The solution corresponding to 
$(R^{\pi}_0,R^{sh,\pi}_0)=(r,z)$ will be denoted by $(R^{\pi}_r,R^{
sh,\pi}_{(r,z)})$. 

\end{pro}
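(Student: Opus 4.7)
My plan is to adapt, in a streamlined form, the argument used for Theorem \ref{theowellpos}: here $\pi$ is a fixed parameter rather than a piecewise-constant jumping process, so the construction needs only to be inductive over the jumps of $N$ and not over the additional deterministic times $t_i$.

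First, I would construct $R^{\pi}$ pathwise. Let $\vartheta_0:=0$ and let $\{\vartheta_n\}_{n\geq 1}$ denote the jump times of $N$. Equation (\ref{param}) forces $R^{\pi}$ to be constant on $[\vartheta_{n-1},\vartheta_n)$ and to satisfy $R^{\pi}(\vartheta_n)=R^{\pi}(\vartheta_n^-)+J(\pi,R^{\pi}(\vartheta_n^-),U_n)$. Starting from any $\{\mathcal{F}_{\vartheta_{n-1}}\}$-measurable value, this produces a unique $\{\mathcal{F}_t\}$-adapted, piecewise-constant, càdlàg process on each interval $[\vartheta_{n-1},\vartheta_n]$.

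Next, on each interval $[\vartheta_{n-1},\vartheta_n]$, the second equation of (\ref{param}) becomes an autonomous SDE in $R^{sh,\pi}$ with $R^{\pi}$ held constant. Writing it in terms of the shifted Brownian motion $W^{\vartheta_{n-1}}(s):=W(\vartheta_{n-1}+s)-W(\vartheta_{n-1})$, which by independence of $W$ and $N$ is a standard Brownian motion independent of $(R^{\pi}(\vartheta_{n-1}),R^{sh,\pi}(\vartheta_{n-1}))$, the diffusion coefficient is locally Hölder continuous by (\ref{sigmabarproperty1}) and the drift is affine, so the Corollary to Theorem 3.2, Chapter 4 of \cite{IkedaWatanabe} (in its localized form) gives a pathwise unique strong solution; assumption (\ref{sigmabarproperty2}) together with Theorem 2.4 in the same chapter rules out explosion. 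Induction over $n$, exactly as in Theorem \ref{theowellpos}, then delivers the unique strong solution of (\ref{param}) on $[0,T]$ for any $T<\infty$.

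It remains to verify the boundary behavior. The inclusion $R^{\pi}(t)\in(\underline r,\overline r)$ is immediate: by (\ref{roverandunderline}) and (\ref{probabilities}), $q(\pi,r,k\delta)=0$ whenever $r+k\delta\notin(\underline r,\overline r)$, so the definition (\ref{J}) of $J$ prevents any jump from leaving the open interval. For $R^{sh,\pi}(t)>0$, I would reuse verbatim the Lyapunov argument concluding the proof of Theorem \ref{theowellpos}: apply Itô's formula to $V_1(z)=z^2-\ln(z)$ up to the stopping times $\alpha_n:=\inf\{t\geq 0:R^{sh,\pi}(t)\leq 1/n\}$ and $\beta_k:=\inf\{t\geq 0:R^{sh,\pi}(t)\geq k\}$. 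Since $R^{\pi}$ takes values in $(\underline r,\overline r)$, the generator bound $k^{sh}(b(R^{\pi})-z)V_1'(z)+\tfrac{1}{2}\overline{\sigma}^2(|R^{\pi}-z|^2)V_1''(z)\leq C(1+z^2)$ follows exactly as before from (\ref{bandsigmaproperty}) and (\ref{sigmabarproperty2}), and Gronwall's Lemma yields $\ln(n)\,\P(\alpha_n\leq t)\to 0$ as $n\to\infty$. There is no substantive obstacle in this proof: everything is a direct specialization of Theorem \ref{theowellpos}, with the inflation component trivialized to a constant.
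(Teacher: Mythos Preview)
Your proposal is correct and is precisely the specialization of the proof of Theorem~\ref{theowellpos} that the paper intends: the paper's own proof is the single sentence ``The proof is analogous to the proof of Theorem~\ref{theowellpos}.'' You have simply written out that analogy explicitly, dropping the outer induction over the $t_i$ since $\pi$ is now a fixed parameter.
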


\begin{proof}
The proof is analogous to the proof of Theorem 
\ref{theowellpos}.
\end{proof}

\begin{lemma}\label{param-prop}
For any continuous function 
$f:\R\times (\underline r,\overline r)\times (0,\infty )\rightarrow\R$  such that $
|f(\pi ,r,z)|\leq C(1+|\pi |+z)$ for 
$(\pi ,r,z)\in\R\times (\underline r,\overline r)\times (0,\infty 
)$, 
$\E\bigg[e^{-\int_0^{\Theta}R^{sh,\pi}_{(r,z)}(s)ds}\big|f\big(\pi 
,R^{\pi}_r(\Theta ),R^{sh,\pi}_{(r,z)}(\Theta )\big)\big|\bigg]$ is finite 
for every $(\pi ,r,z)$ and the function 
\[F(t,\pi ,r,z):=\E\left[e^{-\int_0^{\Theta -t}R^{sh,\pi}_{(r,z)}
(s)ds}f\left(\pi ,R^{\pi}_r(\Theta -t),R^{sh,\pi}_{(r,z)}(\Theta 
-t)\right)\right]\]
is continuous on $[0,\Theta ]\times\R\times (\underline r,\overline 
r)\times (0,\infty )$ and satisfies 
\[|F(t,\pi ,r,z)|\leq C'(1+|\pi |+z),\qquad (t,\pi ,r,z)\in [0,\Theta 
]\times\R\times (\underline r,\overline r)\times (0,\infty ).\]
\end{lemma}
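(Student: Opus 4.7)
The plan is to first establish a moment bound for $R^{sh,\pi}_{(r,z)}$ that delivers both the finiteness of the expectation defining $F$ and the sublinear growth bound, and then prove continuity via a coupling argument on a common probability space.

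For the moment bound, I would apply Ito's formula to $z\mapsto z$ along $R^{sh,\pi}_{(r,z)}$, localize at the exit times $\beta_k=\inf\{t:R^{sh,\pi}_{(r,z)}(t)\geq k\}$ to kill the martingale part, take expectations, and use the boundedness of $b$ on $(\underline r,\overline r)$ together with Gronwall's lemma to obtain
\[
\E\bigl[R^{sh,\pi}_{(r,z)}(s)\bigr]\leq z+B,\qquad s\in[0,\Theta],
\]
uniformly in $(\pi,r)$, where $B$ depends only on $k^{sh}$ and $\sup_{(\underline r,\overline r)}b$. Combined with the bound $|f(\pi,r',z')|\leq C(1+|\pi|+z')$ and the fact that the discount factor is bounded by $1$ since $R^{sh,\pi}>0$ by Proposition \ref{param-wdef}, this yields finiteness of the expectation and the estimate $|F(t,\pi,r,z)|\leq C'(1+|\pi|+z)$. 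Applying the same procedure to $z\mapsto z^2$ and using the Burkholder-Davis-Gundy inequality together with (\ref{sigmabarproperty2}) gives a uniform $L^2$ bound needed for the dominated-convergence step below.

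For continuity, fix a sequence $(t_n,\pi_n,r_n,z_n)\to(t,\pi,r,z)$ and couple the processes $(R^{\pi_n}_{r_n},R^{sh,\pi_n}_{(r_n,z_n)})$ using a single copy of the noise $(N,\{U_n\},W)$. The first key step is that $R^{\pi_n}_{r_n}\to R^{\pi}_r$ almost surely, uniformly on $[0,\Theta]$. Let $\tau_1<\cdots<\tau_{N(\Theta)}$ be the finitely many jump times of $N$ in $[0,\Theta]$. I would induct on $k$: assuming $R^{\pi_n}_{r_n}(\tau_k^-)\to R^{\pi}_r(\tau_k^-)$, the continuity of $q$ following from (\ref{continuitapelambda}) implies that the breakpoints $\sum_{h\leq j}q(\pi_n,R^{\pi_n}_{r_n}(\tau_k^-),h\delta)$ of the step function $J(\pi_n,R^{\pi_n}_{r_n}(\tau_k^-),\cdot)$ in (\ref{J}) converge to the corresponding limiting breakpoints. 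Since these finitely many limit breakpoints form a Lebesgue-null set, the uniform variable $U_k$ almost surely avoids them and lies in the interior of some interval for the limit triple; the value of $J$ on each such interval being the constant $j\delta$ independent of parameters, $U_k$ also lies in the corresponding interval for $(\pi_n,R^{\pi_n}_{r_n}(\tau_k^-))$ for all $n$ sufficiently large, producing the same jump value. Hence eventually $R^{\pi_n}_{r_n}(t)-R^{\pi}_r(t)=r_n-r\to 0$ uniformly in $t\in[0,\Theta]$, a.s. Given this, since the driving Brownian motion is shared and $b$ and $\overline{\sigma}^2$ are continuous, standard Yamada-Watanabe-type estimates applied to the squared difference $|R^{sh,\pi_n}_{(r_n,z_n)}-R^{sh,\pi}_{(r,z)}|^2$ together with (\ref{sigmabarproperty2}) and Gronwall yield $\sup_{s\leq\Theta}|R^{sh,\pi_n}_{(r_n,z_n)}(s)-R^{sh,\pi}_{(r,z)}(s)|\to 0$ in probability. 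Since $\P(\Theta-t\in\{\tau_k\})=0$, the terminal values $(R^{\pi_n}_{r_n}(\Theta-t_n),R^{sh,\pi_n}_{(r_n,z_n)}(\Theta-t_n))$ and the integral $\int_0^{\Theta-t_n}R^{sh,\pi_n}_{(r_n,z_n)}(s)\,ds$ converge a.s. to their limits, and dominated convergence based on the $L^2$ bound from step 1 passes the limit inside the expectation defining $F$.

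The main obstacle is the convergence of the jump process $R^{\pi_n}_{r_n}\to R^{\pi}_r$: because $J(\cdot,\cdot,u)$ in (\ref{J}) is built from indicator functions, it is not pointwise continuous in $(\pi,r)$, so a direct pathwise estimate fails. The resolution exploits the continuous distribution of each $U_k$, which makes it avoid the finitely many parameter-dependent breakpoints almost surely; this converts the pathwise discontinuity in $(\pi,r)$ into almost-sure \emph{eventual equality} of the jump sizes once the common-noise coupling is in place, and thereby reduces the diffusion part to a standard SDE stability argument.
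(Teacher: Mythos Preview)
Your moment-bound step coincides with the paper's: it applies It\^o's formula to $z\mapsto z^p$ with localization and Gronwall to obtain $\E[R^{sh,\pi}_{(r,z)}(t)^p]\leq C_T(1+z^p)$ (it records the cases $p=1$ and general $p\geq 2$, using $p=4$ later), and deduces finiteness and the sublinear bound on $F$ exactly as you do.

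The continuity argument, however, follows a genuinely different route. The paper argues by \emph{weak convergence}: it shows $\{(R^{\pi_n}_{r_n},R^{sh,\pi_n}_{(r_n,z_n)})\}$ is relatively compact via moment bounds, the BDG inequality and the Ethier--Kurtz tightness criteria, identifies every limit point as a solution of (\ref{param}) through the Kurtz--Protter stability theorem for stochastic integrals, and concludes by pathwise (hence weak) uniqueness together with uniform integrability from the $p$-th moment bound. You instead couple all processes on a common probability space via the shared noise $(N,\{U_n\},W)$ and aim for almost-sure/in-probability convergence directly. Your treatment of the jump component is a nice explicit justification of a step the paper simply asserts: that the breakpoints of $J(\cdot,\cdot,u)$ move continuously by (\ref{continuitapelambda}) and the uniform variables $U_k$ almost surely miss the limiting breakpoints, forcing eventual equality of the jump sizes. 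The trade-off is that you then need a pathwise stability estimate for the CIR-type diffusion under perturbation of both the initial point and the $r$-dependent coefficients. This is feasible, but one caution: a Yamada--Watanabe argument works with smooth approximations of $|X_n-X|$, not with the squared difference, because the $\sqrt z$ factor gives only $|\sqrt{X_n}-\sqrt X|^2\leq |X_n-X|$, which does not close a Gr\"onwall loop on $\E|X_n-X|^2$. If you spell out the Le~Gall/Yamada--Watanabe approximation carefully (and treat the $\overline\sigma$-dependence on $R^{\pi_n}_{r_n}$ as a vanishing perturbation), your approach goes through; the paper avoids this computation by offloading it to the cited weak-convergence machinery.
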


\begin{proof}
Let us show preliminarly that, for every $T>0$, 
\begin{equation}\E[R^{sh,\pi}_{(r,z)}(t)]\leq C_T(1+z),\qquad 0\leq 
t\leq T,\,\,\,\,\pi\in\R,\,\,\,r\in (\underline r,\overline r),\,\,\,
z>0,\label{mom1}\end{equation}
and, for every $p\geq 2$, 
\begin{equation}\E[R^{sh,\pi}_{(r,z)}(t)^p]\leq C_T(1+z^p),\qquad 
0\leq t\leq T,\,\,\,\,\pi\in\R,\,\,\,r\in (\underline r,\overline 
r),\,\,\,z>0.\label{momp}\end{equation}
In order to prove $(\ref{momp})$, consider a sequence of bounded, nonnegative $
{\cal C}^2$ 
functions $\{f_n\}$ such that $f_n(z)=z^p$ for 
$0<z\leq n$ and $f_n(z)\leq z^p$ for all $z>0$, and let $\alpha_n
:=\inf\{t\geq 0:\,R^{sh,\pi}_{(r,z)}(t)\geq n\}$. 
By applying Ito's Lemma to the semimartingale $R^{sh,\pi}_{(r,z)}$, 
and to the function $f_n$, and taking expectations, we 
obtain 
\begin{eqnarray*}
&&\E[R^{sh,\pi}_{(r,z)}(t\wedge\alpha_n)^p]\\
&=&\E[f_n(R^{sh,\pi}_{(r,z)}(t\wedge\alpha_n))]\\
&=&f_n(z)+k^{sh}\E\bigg[\int_0^{t\wedge\alpha_n}\bigg([(R^{\pi}_
r(s))-R^{sh,\pi}_{(r,z)}(s)]R^{sh,\pi}_{(r,z)}(s)^{p-1})\\
&&\qquad\qquad\qquad\qquad\qquad\qquad\qquad +\frac {p(p-1)}2\overline {
\sigma}^2(|R^{\pi}(s)-R^{sh,\pi}(s)|^2)R^{sh,\pi}(s)^{p-1}\bigg)ds\bigg
]\\
&\leq&z^p+C\E\bigg[\int_0^{t\wedge\alpha_n}\bigg(1+R^{sh,\pi}_{(r
,z)}(s))^p\bigg)ds\bigg]\\
&\leq&z^p+C\int_0^t\bigg(1+\E\bigg[R^{sh,\pi}_{(r,z)}(s\wedge\alpha_
n))^p\bigg]\bigg)ds,\end{eqnarray*}
where the last but one inequality follows from 
$(\ref{sigmabarproperty2})$. 

\noindent Therefore, by Gronwall's Lemma and Fatou's Lemma, 
\[\E[R^{sh,\pi}_{(r,z)}(t)^p]\leq\bigg(z^p+CT\bigg)e^{CT}.\]
$(\ref{mom1})$ can be proved in an analogous manner. 
\noindent $(\ref{mom1})$ yields both that 
$$\E\bigg[e^{-\int_0^{\Theta}R^{sh,\pi}_{(r,z)}(s)ds}\big|f\big(\pi 
,R^{\pi}_r(\Theta ),R^{sh,\pi}_{(r,z)}(\Theta )\big)\big|\bigg]\;\;\mbox{\rm is finite, and}\;\;\;\;
|F(t,\pi ,r,z)|\leq C'(1+|\pi |+z),$$ for 
$(t,\pi ,r,z)\in [0,\Theta ]\times\R\times (\underline r,\overline 
r)\times (0,\infty )$. 
We are left with proving continuity of $F$. Let 
$(t_n,\pi_n,r_n,z_n)\rightarrow (t,\pi ,r,z)$. Then $\{R^{\pi}_{r_
n}\}$ converges to $R^{\pi}_r$ uniformly 
over compact time intervals, almost surely. In addition 
$\{R^{sh,\pi}_{(r_n,z_n)}\}$ is relatively compact by Theorems 3.8.6 and 
3.8.7 of \cite{EthierKurtz}, the Burkholder-Davies-Gundy 
inequality and $(\ref{momp})$ with $p=4$, and every limit point is 
continuous by Theorem 3.10.2 of \cite{EthierKurtz}. Therefore $\{(R^{\pi}_{
r_n},R^{sh,\pi}_{(r_n,z_n)})\}$ is 
relatively compact. By Theorem 2.7 of 
\cite{KurtzProtter}, every limit point of $\{(R^{\pi}_{r_n},R^{sh,\pi}_{(
r_n,z_n)})\}$ 
satisfies $(\ref{param})$ with $(R^{\pi}_0,R^{sh,\pi}_0)=(r,z)$. 
Since the solution to $(\ref{param})$ is (strongly and 
hence weakly) unique, we can conclude that 
$\{(R^{\pi}_{r_n},R^{sh,\pi}_{(r_n,z_n)})\}$ converges weakly to $
(R^{\pi}_r,R^{sh,\pi}_{(r,z)})$. 
The assertion then follows by observing that $(\ref{momp})$ 
implies that the random variables 
$\bigg\{e^{-\int_0^{\Theta -t_n}R^{sh,\pi}_{(r_n,z_n)}(s)ds}f\left
(\pi_n,R^{\pi}_{r_n}(\Theta -t_n),R^{sh,\pi}_{(r_n,z_n)}(\Theta -
t_n)\right)\bigg\}$ 
are uniformly integrable. 
\end{proof}

\noindent
For a continuous function $f:\R\times (\underline r,\overline r)\times 
(0,\infty )\rightarrow\R$  such that 
$|f(\pi ,r,z)|\leq C(1+|\pi |+z)$ for $(\pi ,r,z)\in\R\times (\underline 
r,\overline r)\times (0,\infty )$, set 
\begin{equation}Bf(\pi ,r,z):=\frac 1{\sqrt {2\pi}v}\int_{\mathbb{
R}}f\left(\gamma (\pi ,r,z)+u,r,z\right)\exp\left(-\frac {u^2}{2v^
2}\right)du.\label{B}\end{equation}

\begin{lemma}\label{B-prop}
For any continuous function 
$f:\R\times (\underline r,\overline r)\times (0,\infty )\rightarrow\R$  such that $
|f(\pi ,r,z)|\leq C(1+|\pi |+z)$ for 
$(\pi ,r,z)\in\R\times (\underline r,\overline r)\times (0,\infty 
)$, $Bf$ is continuous and satisfies 
\[|Bf(\pi ,r,z)|\leq C'(1+|\pi |+z),\qquad (\pi ,r,z)\in\R\times 
(\underline r,\overline r)\times (0,\infty ).\]
\end{lemma}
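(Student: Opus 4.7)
The plan is to obtain both claims by straightforward inspection of the integral defining $Bf$: the growth bound via a direct estimate using the linearity of $\gamma$, and continuity via dominated convergence on a locally uniform dominating function.

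For the growth estimate, I would substitute the hypothesis $|f(\pi',r,z)|\le C(1+|\pi'|+z)$ with $\pi'=\gamma(\pi,r,z)+u$, pull the constant parts out of the Gaussian integral, and use $\int |u|e^{-u^2/(2v^2)}du/(\sqrt{2\pi}v)=v\sqrt{2/\pi}$ to bound the contribution of the $|u|$ term by a constant. What remains is $C(1+|\gamma(\pi,r,z)|+z)$, and since $\gamma(\pi,r,z)=\alpha\pi+k^\Pi(\pi^*-\pi)+\beta(r-z)$ is linear in its arguments, with $r$ restricted to the bounded interval $(\underline r,\overline r)$, we have $|\gamma(\pi,r,z)|\le C_1(1+|\pi|+z)$. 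Combining gives $|Bf(\pi,r,z)|\le C'(1+|\pi|+z)$ as required.

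For continuity, I would take a sequence $(\pi_n,r_n,z_n)\to(\pi,r,z)$ in $\mathbb{R}\times(\underline r,\overline r)\times(0,\infty)$ and show pointwise (in $u$) convergence of the integrand $f(\gamma(\pi_n,r_n,z_n)+u,r_n,z_n)e^{-u^2/(2v^2)}$ to $f(\gamma(\pi,r,z)+u,r,z)e^{-u^2/(2v^2)}$, which follows from continuity of $\gamma$ (trivially, as it is linear) and of $f$. For the dominating function, note that the sequence $(\pi_n,r_n,z_n)$ is bounded, hence so is $\gamma(\pi_n,r_n,z_n)$, and therefore
\[|f(\gamma(\pi_n,r_n,z_n)+u,r_n,z_n)|\le C(1+|\gamma(\pi_n,r_n,z_n)|+|u|+z_n)\le M(1+|u|)\]
for some $M$ independent of $n$. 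Since $(1+|u|)e^{-u^2/(2v^2)}$ is integrable, dominated convergence yields $Bf(\pi_n,r_n,z_n)\to Bf(\pi,r,z)$.

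I expect no genuine obstacle: both parts are routine Gaussian-integral estimates. The only point to be mildly careful about is that the boundedness of $r_n$ in the dominating function relies on $(\underline r,\overline r)$ being bounded (guaranteed by the model setup) and on picking $n$ large enough that $(\pi_n,r_n,z_n)$ lies in a compact set around $(\pi,r,z)$; the growth bound on $f$ is used purely in the $\pi$-variable, while the $z_n$ variable stays away from $0$ and $\infty$ along a convergent sequence, so no issue arises at the boundary $z=0$.
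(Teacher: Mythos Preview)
Your proposal is correct and follows essentially the same approach as the paper: both obtain the growth bound by inserting the sublinear hypothesis into the Gaussian integral and exploiting the affine form of $\gamma$ (together with the boundedness of $r$), and both establish continuity by dominated convergence with a dominating function of the form $M(1+|u|)e^{-u^2/(2v^2)}$ coming from the boundedness of $\gamma(\pi_n,r_n,z_n)$ along a convergent sequence. The only cosmetic difference is the order in which the two parts are treated.
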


\begin{proof}\ 
For $(\pi_n,r_n,z_n)\rightarrow (\pi ,r,z)$ 
\[|f\left(\gamma (\pi_n,r_n,z_n)+u,r_n,z_n\right)|\leq C(\pi ,r,z
)(1+|u|)\]
and the first assertion follows by dominated 
convergence. 
In addition, by $(\ref{gamma})$, 
\[\begin{array}{rcl}
|Bf(\pi ,r,z)|&\leq&C\,\frac 1{\sqrt {2\pi}v}\int_{\mathbb{R}}\left
(1+|\gamma (\pi ,r,z)+u|+z\right)\exp\left(-\frac {u^2}{2v^2}\right
)du\\
&\leq&C\left(1+|\gamma (\pi ,r,z)|+v+z\right)\leq C_2C(1+|\pi |+z
).\end{array}
)\]
\end{proof}

\begin{lemma}\label{recursion}
Let $\varphi$ be the function defined by $(\ref{price})$. Then 
\[\varphi (t,\pi ,r,z)=\left\{\begin{array}{lll}
\Phi (\pi ,r,z),&t=t_M,\\
\varphi^i(t,\pi ,r,z),&t_i\leq t<t_{i+1},&i=0,...,M-1,\end{array}
\right.\]
where the functions $\varphi^i$ are defined recursively in the 
following way: 
\[\varphi^{M-1}(t_{M-1}+t,\pi ,r,z)=\E\left[e^{-\int_0^{\Theta -t}
R^{sh,\pi}_{(r,z)}(s)ds}B\Phi\left(\pi ,R^{\pi}_{r}(\Theta -t
),R^{sh,\pi}_{(r,z)}(\Theta -t)\right)\right],\]
and
\[\varphi^i(t_i+t,\pi ,r,z)=\E\left[e^{-\int_0^{\Theta -t}R^{sh,\pi}_{
(r,z)}(s)ds}B\left(\varphi^{i+1}(t_{i+1},\cdot ,\cdot ,\cdot )\right
)\left(\pi ,R^{\pi}_{r}(\Theta -t),R^{sh,\pi}_{(r,z)}(\Theta 
-t)\right)\right],\]
for $\, 0\leq t\leq\Theta\, $ and for $\, i=0,...,M-2$. 
\end{lemma}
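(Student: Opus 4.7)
The plan is a backward induction on $i$ from $M-1$ down to $0$, built on the tower property of conditional expectation applied at the inflation measurement times $t_{i+1}$, together with the Markov structure of $(\Pi,R,R^{sh})$. The key structural observation is that, conditioning on $(\Pi(t),R(t),R^{sh}(t))=(\pi,r,z)$ for $t\in [t_i,t_{i+1})$, inflation stays frozen at $\pi$ throughout $[t,t_{i+1})$, so on this interval $(R,R^{sh})$ satisfies precisely the parametrized system $(\ref{param})$ with parameter $\pi$. Using the independence, stated at the beginning of Section \ref{subsecwellpos}, of $(N(t+\cdot)-N(t),\{U_{N(t)+n}\},W(t+\cdot)-W(t))$ from $\mathcal{F}_t$, together with the pathwise uniqueness in Proposition \ref{param-wdef}, the conditional law of $(R,R^{sh})$ on $[t,t_{i+1}]$ coincides with the law of $(R^{\pi}_r,R^{sh,\pi}_{(r,z)})$ on $[0,t_{i+1}-t]$.

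For the base case $i=M-1$, I would first invoke the tower property by conditioning on $\mathcal{F}_{t_M^-}$ to integrate out the Gaussian jump $\epsilon_M$. Because $\P(N\mbox{ jumps at }t_M)=0$ and $R^{sh}$ has continuous paths, $(R(t_M^-),R^{sh}(t_M^-))=(R(t_M),R^{sh}(t_M))$ almost surely, and the definition of $\Pi(t_M)$ in $(\ref{dynamicsInfl})$, together with the independence and Gaussian distribution of $\epsilon_M$, yields that the inner conditional expectation of $\Phi(\Pi(t_M),R(t_M),R^{sh}(t_M))$ equals $B\Phi(\pi,R(t_M),R^{sh}(t_M))$, with $B$ as in $(\ref{B})$. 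Combining this with the identification of $(R,R^{sh})$ with the parametrized process delivers the stated expression for $\varphi^{M-1}$. For the inductive step on $[t_i,t_{i+1})$, the tower property at $t_{i+1}$ gives
\[
\varphi(t,\pi,r,z)=\E\!\left[e^{-\int_t^{t_{i+1}}R^{sh}(s)ds}\,\varphi(t_{i+1},\Pi(t_{i+1}),R(t_{i+1}),R^{sh}(t_{i+1}))\,\Big|\,(\Pi(t),R(t),R^{sh}(t))=(\pi,r,z)\right],
\]
into which I substitute $\varphi(t_{i+1},\cdot,\cdot,\cdot)=\varphi^{i+1}(t_{i+1},\cdot,\cdot,\cdot)$ from the inductive hypothesis. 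Averaging over $\epsilon_{i+1}$ as in the base case produces $B\varphi^{i+1}(t_{i+1},\cdot,\cdot,\cdot)$ evaluated at $(\pi,R(t_{i+1}),R^{sh}(t_{i+1}))$, and the identification with the parametrized process then gives the claimed recursion.

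For the conditional expectations and the operator $B$ to be well defined, I would verify inductively that each $\varphi^{i+1}(t_{i+1},\cdot,\cdot,\cdot)$ is continuous and satisfies a bound $|\varphi^{i+1}(t_{i+1},\pi,r,z)|\le C_i(1+|\pi|+z)$, starting from the payoff bound $(\ref{payoff})$; Lemma \ref{B-prop} then propagates this through $B$, and Lemma \ref{param-prop} through the discounted expectation, so the induction closes. The main subtlety is not computational but structural: one must carefully justify the identification, under conditioning, of $(R,R^{sh})_{|[t,t_{i+1}]}$ with the parametrized process $(R^{\pi}_r,R^{sh,\pi}_{(r,z)})$, which relies on both the mutual independence of the driving noises specified in $(\ref{filtrationbig})$ and the pathwise uniqueness established in Proposition \ref{param-wdef}.
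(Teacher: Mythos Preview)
Your proposal is correct and follows essentially the same route as the paper: backward induction on $i$, tower property at $t_{i+1}$ (and at $t_M^-$ in the base case) to isolate the Gaussian jump $\epsilon_{i+1}$ and produce the operator $B$, followed by identification of $(R,R^{sh})$ on $[t,t_{i+1}]$ with the time-homogeneous parametrized process $(R^{\pi}_r,R^{sh,\pi}_{(r,z)})$. Your write-up is in fact somewhat more explicit than the paper's about the justification of this last identification (independence of the shifted noises from $\mathcal{F}_t$ and pathwise uniqueness from Proposition~\ref{param-wdef}) and about the inductive propagation of the sublinear growth bound via Lemmas~\ref{param-prop} and~\ref{B-prop}.
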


\begin{proof}
For $t_{M-1}\leq t<t_M$, 
\begin{eqnarray*}
&&\E\bigg[\exp\bigg(-\int_t^{t_M}R^{sh}(s)ds\bigg)\Phi (\Pi (t_M)
,R(t_M),R^{sh}(t_M))\bigg|(\Pi (t),R(t),R^{sh}(t))\bigg]\\
&=&\E\bigg[\exp\bigg(-\int_t^{t_M}R^{sh}(s)ds\bigg)\Phi (\Pi (t_M
),R(t_M),R^{sh}(t_M))\bigg|{\cal F}_t\bigg]\\
&=&\E\bigg[\E\bigg[\exp\bigg(-\int_t^{t_M}R^{sh}(s)ds\bigg)\Phi (
\Pi (t_M),R(t_M),R^{sh}(t_M))\bigg|{\cal F}_{t_M^{-}}\bigg]\bigg|
{\cal F}_t\bigg]\\
\ &=&\E\bigg[\exp\bigg(-\int_t^{t_M}R^{sh}(s)ds\bigg)B\Phi (\Pi (
t),R(t_M),R^{sh}(t_M))\bigg|{\cal F}_t\bigg]\\
&=&\E\bigg[\exp\bigg(-\int_t^{t_M}R^{sh}(s)ds\bigg)B\Phi (\Pi (t)
,R(t_M),R^{sh}(t_M))\bigg|(\Pi (t),R(t),R^{sh}(t))\bigg]\end{eqnarray*}
Therefore, for $0\leq t<\Theta$, 
\begin{eqnarray*}
&&\varphi (t_{M-1}+t,\pi ,r,z)\\
&=&\E\bigg[\exp\bigg(-\int_{t_{M-1}+t}^{t_M}R^{sh}(s)ds\bigg)B\Phi 
(\Pi (t_{M-1}+t),R(t_M),R^{sh}(t_M))\bigg|(\Pi ,R,R^{sh})(t_{M-1}
+t)=(\pi ,r,z)\bigg]\\
&=&\E\bigg[\exp\bigg(-\int_{t_{M-1}+t}^{t_M}R^{sh}(s)ds\bigg)B\Phi 
(\pi ,R^{\pi}(t_M),R^{sh,\pi}(t_M))\bigg|(R^{\pi},R^{sh,\pi})(t_{
M-1}+t)=(r,z)\bigg]\\
&=&\E\bigg[\exp\bigg(-\int_t^{\Theta}R^{sh}(s)ds\bigg)B\Phi (\pi 
,R^{\pi}(\Theta ),R^{sh,\pi}(\Theta ))\bigg|(R^{\pi},R^{sh,\pi})(
t)=(r,z)\bigg]\\
&=&\E\bigg[\exp\bigg(-\int_0^{\Theta -t}R^{sh}_{(r,z)}(s)ds\bigg)
B\Phi (\pi ,R^{\pi}_r(\Theta -t),R^{sh,\pi}_{(r,z)}(\Theta -t))\bigg
],\end{eqnarray*}
where the last two inequalities follow from the fact 
that $(R^{\pi},R^{sh,\pi})$ is time homogeneous. 
Assuming inductively that $\varphi (t_{i+1},\pi ,r,z)=\varphi^{i+
1}(t_{i+1},\pi ,r,z)$, 
we have, for $t_i\leq t<t_{i+1}$, 
\begin{eqnarray*}
&&\E\bigg[\exp\bigg(-\int_t^{t_M}R^{sh}(s)ds\bigg)\Phi (\Pi (t_M)
,R(t_M),R^{sh}(t_M))\bigg|(\Pi (t),R(t),R^{sh}(t))\bigg]\\
&=&\E\bigg[\E\bigg[\exp\bigg(-\int_t^{t_M}R^{sh}(s)ds\bigg)\Phi (
\Pi (t_M),R(t_M),R^{sh}(t_M))\bigg|{\cal F}_{t_{i+1}}\bigg]\bigg|
{\cal F}_t\bigg]\\
&=&\E\bigg[\exp\bigg(-\int_t^{t_{i+1}}R^{sh}(s)ds\bigg)\varphi^{i
+1}(t_{t+1},\Pi (t_{i+1}),R(t_{i+1}),R^{sh}(t_{i+1}))\bigg|{\cal F}_
t\bigg]\\
&=&\E\bigg[\exp\bigg(-\int_t^{t_{i+1}}R^{sh}(s)ds\bigg)B\left(\varphi^{
i+1}(t_{i+1},\cdot ,\cdot ,\cdot )\right)\left(\Pi (t_{i+1}),R(t_{
i+1}),R^{sh}(t_{i+1})\right)\bigg|(\Pi (t),R(t),R^{sh}(t))\bigg]\end{eqnarray*}
and hence, by a computation analogous to that for the 
interval $[t_{M-1},t_M)$, 
\[\varphi (t_i+t,\pi ,r,z)=\E\left[e^{-\int_0^{\Theta -t}R^{sh,\pi}_{
(r,z)}(s)ds}B\left(\varphi^{i+1}(t_{i+1},\cdot ,\cdot ,\cdot )\right
)\left(\pi ,R^{\pi}_{r}(\Theta -t),R^{sh,\pi}_{(r,z)}(\Theta 
-t)\right)\right]\]
for $0\leq t\leq\Theta .$
\end{proof}

For an $\R$-valued diffusion process $X$ with time 
independent coefficients $b$ 
and $\sigma$, we know, by the Feynman-Kac formula, 
under suitable assumptions, that the function 
\[\psi (t,x):=\E\bigg[\exp\bigg(-\int_t^{T}X(s)ds\bigg)\Psi (
X(T))\bigg| X(t)=x \bigg]\\
=\E\bigg[\exp\bigg(-\int_0^{T-t}X_x(s)ds\bigg)\Psi (X_x(T-t))\bigg]
\]
where $X_x$ is the process starting at $x$ and
$\psi (t,x)$ is of class ${\cal C}^{1,2}$ and satisfies 
\[\frac {\partial\psi}{\partial t}(t,x)+b(x)\frac {\partial\psi}{
\partial x}(t,x)+\frac 12\sigma^2(x)\frac {\partial^2\psi}{\partial 
x^2}-x\psi (t,x)=0,\qquad 0\leq t<T,~x\in\R\\
,\]
\[\psi (\Theta ,x)=\Psi (x),\qquad x\in\R.\]
By analogy, we consider, for each fixed $\pi$, 
for each function $\varphi^i(t_i+\cdot ,\pi ,\cdot ,\cdot )$ 
defined in Lemma \ref{recursion}, the 
following equation, which reflects the dynamics of $(R^{\pi}_{(r,
z)},R^{sh,\pi}_{(r,z)})$: 
\begin{equation}\begin{array}{l}
\frac {\partial\psi}{\partial t}(t,r,z)+k^{sh}\left(b(r)-z\right)\frac {
\partial\psi}{\partial z}(t,r,z)+\frac 12\overline {\sigma}^2\left
(|r-z|^2\right)z\frac {\partial^2\psi}{\partial z^2}(t,r,z)\\
\qquad\qquad\quad\quad +\lambda (\pi ,r)\sum_{k=-m}^m\left[\psi\left
(t,r+k\delta ,z\right)-\psi\left(t,r,z\right)\right]p(\pi ,r,k\delta 
)-z\psi (t,r,z)=0,\\
\qquad\qquad\qquad\qquad\qquad\qquad\qquad\qquad\qquad\qquad\qquad
\qquad\qquad\quad 0\leq t<\Theta ,~r\in (\underline r,\overline r
),\,\,\,z>0,\\
\psi (\Theta ,r,z)=\Psi^i(\pi ,r,z),\qquad\qquad\qquad\qquad\qquad
\qquad\qquad\qquad r\in (\underline r,\overline r),\,\,\,z>0.\end{array}
\label{valeq}\end{equation}
where 
\begin{equation}\Psi^{M-1}:=B\Phi ,\quad\Psi^i:=B\left(\varphi^{i
+1}(t_{i+1},\cdot ,\cdot ,\cdot )\right)\quad\mbox{\rm for }i=0,.
..,M-2.\label{terminal}\end{equation}
$(\ref{valeq})$ is a not a standard partial differential 
equation and it is not clear whether it admits a classical 
solution. Instead  
we look for a viscosity solution. The definition of 
viscosity solution, in the present set up, is recalled 
below for the convenience of the reader. 

\begin{defi}\label{viscsol}
A viscosity solution of $(\ref{valeq})$ is a continuos 
function $\psi$ defined on $[0,\Theta ]\times (\underline r,\overline 
r)\times (0,\infty )$ such that, for 
each $(t_0,r_0,z_0)\in [0,\Theta ]\times (\underline r,\overline 
r)\times (0,\infty )$, for each $f\in {\cal C}^{1,2}\bigg([0,\Theta 
]\times\big((\underline r,\overline r)\times (0,\infty )\big)\bigg
)$ 
such that 
$$\sup_{(t,r,z)\in [0,\Theta ]\times (\underline r,\overline r)\times 
(0,\infty )}\left(\psi (t,r,z)-f(t,r,z)\right)=\left(\psi (t_0,r_
0,z_0)-f(t_0,r_0,z_0)\right)=0$$
it holds 
\[\begin{array}{l}
\frac {\partial f}{\partial t}(t,r,z)+k^{sh}\left(b(r)-z\right)\frac {
\partial f}{\partial z}(t,r,z)+\frac 12\overline {\sigma}^2\left(
|r-z|^2\right)z\frac {\partial^2f}{\partial z^2}(t,r,z)\\
\qquad\qquad\quad\quad +\lambda (\pi ,r)\sum_{k=-m}^m\left[\psi\left
(t,r+k\delta ,z\right)-\psi\left(t,r,z\right)\right]p(\pi ,r,k\delta 
)-z\psi (t,r,z)\geq 0,\\
\qquad\qquad\qquad\qquad\qquad\qquad\qquad\qquad\qquad\qquad\qquad
\quad 0\leq t<\Theta ,~r\in (\underline r,\overline r
),\,\,\,z>0,\\
f(\Theta ,r,z)\leq\Psi (\pi ,r,z),\qquad\qquad\qquad\qquad\qquad\qquad
\qquad\qquad r\in (\underline r,\overline r),\,\,\,z>0,\end{array}
\]
and, for each $f\in {\cal C}^{1,2}\bigg([0,\Theta ]\times\big((\underline 
r,\overline r)\times (0,\infty )\big)\bigg)$ such that 
\[\inf_{(t,r,z)\in [0,\Theta ]\times (\underline r,\overline r)\times 
(0,\infty )}\left(\psi (t,r,z)-f(t,r,z)\right)=\left(\psi (t_0,r_
0,z_0)-f(t_0,r_0,z_0)\right)=0\]
 
\noindent it holds 
\[\begin{array}{l}
\frac {\partial f}{\partial t}(t,r,z)+k^{sh}\left(b(r)-z\right)\frac {
\partial f}{\partial z}(t,r,z)+\frac 12\overline {\sigma}^2\left(
|r-z|^2\right)z\frac {\partial^2f}{\partial z^2}(t,r,z)\\
\qquad\qquad\quad\quad +\lambda (\pi ,r)\sum_{k=-m}^m\left[\psi\left
(t,r+k\delta ,z\right)-\psi\left(t,r,z\right)\right]p(\pi ,r,k\delta 
)-z\psi (t,r,z)\leq 0,\\
\qquad\qquad\qquad\qquad\qquad\qquad\qquad\qquad\qquad\qquad\qquad
0\leq t<\Theta ,~r\in (\underline r,\overline r
),\,\,\,z>0,\\
f(\Theta ,r,z)\geq\Psi (\pi ,r,z),\qquad\qquad\qquad\qquad\qquad\qquad
\qquad\qquad r\in (\underline r,\overline r),\,\,\,z>0.\end{array}
\]
\end{defi}

\noindent As mentioned in the Introduction, 
we will obtain existence and uniqueness of the viscosity 
solution to $(\ref{valeq})$ 
from a general result for valuation 
equations for contingent claims written 
on jump-diffusion underlyings proved in \cite{CostantiniPapiD'Ippoliti2012}. 
Since the state space of $(\ref{valeq})$ is unbounded, as 
usual in the literature uniqueness will hold in the class of functions with 
a prescribed growth rate.
For the convenience of the reader we summarise here the results of \cite{CostantiniPapiD'Ippoliti2012}. 
\cite{CostantiniPapiD'Ippoliti2012} considers a general equation of the form
\begin{equation}\label{eq-1}\left\{\begin{array}{ll}
\partial_t\psi(t,x)+L\psi(t,x)-c(x)\psi(t,x)=g(t,x),\quad&(t,x)\in (0,T)\times
D,\\
\psi(T,x)=\Psi (x),\quad&x\in D,\end{array}
\right.\end{equation}
with
\begin{equation}\label{op}Lf(x)=\nabla f(x)b(x)+\frac 12\mbox{\rm tr}\left(
\nabla^2f(x)a(x)\right)+\int_D\left[f(x')-f(x)\right]m(x,dx'),\end{equation}
where $D$ is a (possibly unbounded) starshaped open subset of $\R^d$.
The following assumptions on the coefficients will be 
made.

\begin{itemize}
\item[(H1)]\label{H1}
$a:D\rightarrow\R^{d\times d}$ is of the form $a=\sigma\sigma^T$, 
with $a=(a_{i,j})_{i,j=1,\ldots ,d}$, where $a_{i,j}\in {\cal C}^
2(D)$, 
and $b:D\rightarrow\R^d$ is Lipschitz continuous on compact subsets of $
D$.
\item[(H2)]\label{H2}
Denoting by ${\cal M}(D)$ the space of finite Borel measures on D, endowed with 
the weak convergence topology,
$m:D\rightarrow {\cal M}(D)$ is continuous and
\begin{eqnarray}
\sup_{x\in D}\left|\int_Df(x')m(x,dx')\right|<\infty ,\qquad\,\,\,
\forall\;f\in {\cal C}_c(D).\end{eqnarray}
\item[(H3)]\label{H3}
There exists a nonnegative function $V\in {\cal C}^2(D)$, such that
\begin{eqnarray*}
&&\int_DV(x')m(x,dx')<+\infty ,\,\;\;\forall\;x\in D,\quad LV(x)\leq 
C\left(1+V(x)\right),\;\;\forall\;x\in D,\\
&&\lim_{x\in D,\,x\rightarrow x_0}V(x)=+\infty ,\,\forall\;x_0\in 
D,\quad\lim_{x\in D,\,|x|\rightarrow +\infty}V(x)=+\infty ,\end{eqnarray*}
and
\item[(H4)]\label{H4} $g\in {\cal C}([0,T]\times D)$, $c,\,\psi\in 
{\cal C}(D)$, and
$c$ is bounded from below. There exists a strictly increasing function $
l:[0,+\infty )\rightarrow [0,+\infty )$, such that
\begin{eqnarray*}
&&s\mapsto sl(s)\;\quad\mbox{\rm is convex,}\qquad\lim_{s\rightarrow 
+\infty}l(s)=+\infty ,\\
&&(s_1+s_2)l(s_1+s_2)\leq C\left(s_1l(s_1)+s_2l(s_2)\right),\quad\;\;
\forall\;s_1,s_2\geq 0,\end{eqnarray*}
and the following holds:
\begin{eqnarray*}
|g(t,x)|l(|g(t,x)|)&\leq&C_T\left(1+V(x)\right),\\
|\Psi (x)|l(|\Psi (x)|)&\leq&C\left(1+V(x)\right),\end{eqnarray*}
for all $(t,x)\in [0,T]\times D$.
\end{itemize}

\begin{teo}\label{CPD}
Assume that \ref{H1},\ref{H2},\ref{H3} and \ref{H4} hold.
Then for every probability distribution $P_0$ on $D$, there exists
one and only one stochastic process $X$ solution of the martingale problem for
$(L,P_0)$ with $\mathcal D(L)=\mathcal C^2_c(D)$, that is the homogeneous strong 
Markov process $X$ with right continuos paths with left hand limits.

\noindent Moreover, denoting by $X_x$ the process with $P_0=\delta_x$, 
we have, for every $x\in D$, and for every 
$t\in [0,\Theta ]$,

\begin{equation*} 
\E\left[\,\left|\Psi (X_x(\Theta -t))e^{-\int_0^{\Theta -
t}c(X_x(r)dr}-\int_0^{\Theta -t}g(t+s,X_x(s)e^{-\int_0^sc(X_x(r)dr}ds\right
|\,\,\right]<+\infty .\label{u-int}
\end{equation*}

\noindent and the function
\begin{equation}\psi (t,x)=\E\left[\Psi (X_x(\Theta -t))e^{-\int_
0^{\Theta -t}c(X_x(r)dr}-\int_0^{\Theta -t}g(t+s,X_x(s)e^{-\int_0^
sc(X_x(r)dr}ds\,\right],\label{fk}\end{equation}
\vskip.1in
\noindent 
for all $(t,x)\in [0,\Theta ]\times D$, is continuous on $[0,\Theta ]\times D$ and is the only viscosity solution of $
(\ref{eq-1})$  satisfying
\begin{equation}|\psi (t,x)|l(|\psi (t,x)|)\leq C_{\Theta}\left(1
+V(x)\right),\;\qquad\forall\;(t,x)\in [0,\Theta]\times D.\label{V-growth}\end{equation}
\end{teo}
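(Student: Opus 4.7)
The plan is to prove Theorem \ref{CPD} in four steps: (i) well-posedness of the martingale problem for $L$; (ii) integrability and continuity of the Feynman-Kac functional $(\ref{fk})$; (iii) the viscosity sub- and super-solution property; (iv) uniqueness in the class $(\ref{V-growth})$ via a comparison principle. The strategy rests on the Lyapunov function $V$ of (H3), which has to simultaneously prevent the process from reaching $\partial D$ or $\infty$ and serve as a weight in the comparison argument.

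For step (i), I would first construct the process locally. On each compact $K\subset D$, (H1) gives a locally Lipschitz drift and a $\mathcal C^2$ diffusion matrix, and (H2) gives a bounded continuous jump kernel, so standard results (Ethier-Kurtz, Chapter 4) yield a unique solution to the martingale problem for $L$ on $\mathcal C^2_c(D)$ up to the exit time $\tau_K$. Taking an increasing sequence $K_n\uparrow D$ and applying Dynkin's formula to $V$, the bound $LV\leq C(1+V)$ and Gronwall's lemma yield $\E[V(X_x(t\wedge\tau_{K_n}))]\leq (V(x)+Ct)e^{Ct}$. Since $V\to+\infty$ both at $\partial D$ and at infinity, $\tau_{K_n}\to+\infty$ a.s., producing the global strong Markov process with càdlàg paths.

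For step (ii), since $c$ is bounded below, $e^{-\int_0^sc(X_x(r))\,dr}$ is bounded by $e^{Cs}$; combining this with the moment estimate $\E[V(X_x(t))]\leq C_T(1+V(x))$ from step (i) and the growth control on $\Psi$ and $g$ in (H4) gives the integrability claim and the bound $(\ref{V-growth})$ for $\psi$. Continuity of $\psi$ in $(t,x)$ follows from the Feller property of the semigroup, which in turn follows from uniqueness of the martingale problem by the usual weak-convergence argument for perturbed initial data, together with uniform integrability supplied by the $V$-moment bounds and the convexity/superlinearity of $sl(s)$. Step (iii) is the classical derivation from the dynamic programming identity: for a smooth test function $f$ touching $\psi$ from above at $(t_0,x_0)$, apply Itô's formula to $f(t,X_{x_0}(t))e^{-\int_0^tc(X_{x_0}(r))\,dr}$ on $[0,h]$, compare with $(\ref{fk})$, divide by $h$, and let $h\downarrow 0$ to extract the subsolution inequality; the supersolution case is symmetric.

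The main obstacle is step (iv), the comparison principle, which is genuinely the technical heart of \cite{CostantiniPapiD'Ippoliti2012}. The idea is the standard doubling-of-variables combined with the Crandall-Ishii lemma, but three features complicate it: the nonlocal term $\int[\psi(x')-\psi(x)]m(x,dx')$, the degeneracy of $a$ (which may vanish at $\partial D$), and the unboundedness of $D$. To handle unboundedness one penalizes with a multiple of $V$ so that the auxiliary supremum is attained, while the convexity of $s\mapsto sl(s)$ and the subadditivity condition $(s_1+s_2)l(s_1+s_2)\leq C(s_1l(s_1)+s_2l(s_2))$ are exactly what is needed to split growth terms between the sub- and supersolution and to absorb the error. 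The nonlocal term is handled by truncating the measure $m(x,\cdot)$ on small balls and exploiting the continuity in (H2). Once comparison is established, uniqueness of $\psi$ in the class $(\ref{V-growth})$ follows, and since $(\ref{fk})$ produces one such solution, the theorem is proved.
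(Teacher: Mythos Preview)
Your proposal sketches a plausible proof strategy, but note that the paper does not prove Theorem~\ref{CPD} at all. The theorem is explicitly introduced with the sentence ``For the convenience of the reader we summarise here the results of \cite{CostantiniPapiD'Ippoliti2012}'', and no proof follows its statement; the paper simply quotes it and then applies it in Theorem~\ref{sol}. So there is nothing to compare your argument against in this paper --- the ``paper's own proof'' is a citation.

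That said, your four-step outline (local construction plus Lyapunov nonexplosion for well-posedness; $V$-moment bounds and Feller continuity for the Feynman--Kac functional; the dynamic-programming/It\^o argument for the viscosity property; and a $V$-penalized doubling-of-variables comparison principle for uniqueness) is a reasonable high-level summary of what \cite{CostantiniPapiD'Ippoliti2012} actually does, and you correctly identify the comparison principle as the place where the hypotheses on $l$ and the convexity of $s\mapsto sl(s)$ are genuinely used. If you intend to include a proof here rather than cite, be aware that the comparison step for degenerate integro-differential operators on a non-smooth, possibly unbounded domain $D$ is delicate: the Crandall--Ishii lemma must be combined with a careful treatment of the nonlocal term (one cannot simply ``truncate on small balls'' without further structure on $m$), and the penalization by $V$ interacts with the jump part through the condition $\int_D V(x')\,m(x,dx')<+\infty$ in (H3), which your sketch does not invoke. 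A full proof would need to make that interaction explicit.
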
\ 

\noindent
We are now ready to state the main result of this 
section. 

\begin{teo}\label{sol}
Let $\varphi$ be the function defined by $(\ref{price})$ and let 
$\varphi^i$ be the functions defined in Lemma \ref{recursion}. 
Then: For each $\pi\in\R$, The function $\varphi^{M-1}(t_{M-1}+\cdot 
,\pi ,\cdot ,\cdot )$ 
is the unique viscosity solution of the equation 
$(\ref{valeq})$ with terminal condition $\Psi^{M-1}:=B\Phi$ 
satisfying $(\ref{V-growth})$, where $V$ is defined by 
$(\ref{V})$-$(\ref{V0})$-$(\ref{V1})$ below. 
For each $i=0,...,M-2$, for each $\pi\in\R$, 
the function $\varphi^i(t_i+\cdot ,\pi ,\cdot ,\cdot )$ 
is the unique viscosity solution of the equation 
$(\ref{valeq})$ with terminal condition 
$\Psi^i:=B\left(\varphi^{i+1}(t_{i+1},\cdot ,\cdot ,\cdot )\right
)$ satisfying $(\ref{V-growth})$ with 
$V$ as above.  
\end{teo}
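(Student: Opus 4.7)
The plan is to fix $\pi \in \R$ and apply Theorem \ref{CPD} of Costantini--Papi--D'Ippoliti to the equation $(\ref{valeq})$ rewritten in the form $(\ref{eq-1})$, with state space $D = (\underline{r}, \overline{r}) \times (0, +\infty)$ (convex, hence starshaped), generator $L = L^\pi$ given by the $\pi$-dependent operator appearing in the left-hand side of $(\ref{valeq})$, discount rate $c(r,z) = z$, source $g \equiv 0$, and terminal condition $\Psi = \Psi^i$. By Proposition \ref{param-wdef}, the process $(R^\pi_r, R^{sh,\pi}_{(r,z)})$ is the unique strong solution of the associated martingale problem and stays in $D$, so the Feynman--Kac representation $(\ref{fk})$ of the unique viscosity solution coincides with the conditional expectation appearing in Lemma \ref{recursion}. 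A backward induction on $i$ from $M-1$ down to $0$ then delivers the claim for every $i$.

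For hypothesis (H1), the diffusion matrix has only one nonzero entry, $a_{22}(r,z) = \overline{\sigma}^2(|r-z|^2)\,z$, which is $\mathcal{C}^2$ on $D$ by $(\ref{sigmabarproperty1})$ and $z>0$, and the continuous drift $(0,k^{sh}(b(r)-z))^\top$ is Lipschitz on compact subsets of $D$. For (H2), the jump kernel $m_\pi(r,z;\cdot)=\lambda(\pi,r)\sum_{k}p(\pi,r,k\delta)\,\delta_{(r+k\delta,z)}$ is concentrated on finitely many interior points of $D$ by $(\ref{roverandunderline})$, has total mass $\leq\overline{\lambda}$, and is weakly continuous in $(r,z)$ by $(\ref{continuitapelambda})$. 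For (H4), take $l(s)=s$, so that $sl(s)=s^2$ is convex and $(s_1+s_2)^2\leq 2(s_1^2+s_2^2)$; by Lemmas \ref{param-prop} and \ref{B-prop} together with the recursive definition $(\ref{terminal})$, one obtains inductively $|\Psi^i(\pi,r,z)|\leq C_\pi(1+z)$ and continuity of $\Psi^i(\pi,\cdot,\cdot)$, so $|\Psi^i|\,l(|\Psi^i|) = |\Psi^i|^2 \leq C_\pi(1 + z^2)$, which is dominated by $C(1+V)$ as soon as the Lyapunov function $V$ contains a $z^2$ term. Finally $c(r,z)=z\geq 0$ is bounded below and $g\equiv 0$ trivially satisfies its growth bound.

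The main obstacle is the construction of a nonnegative Lyapunov function satisfying (H3). I propose
\[V(r,z) := z^2 - \log z - \log(\overline{r}-r) - \log(r-\underline{r}) + K,\]
with $K$ chosen so that $V\geq 0$ on $D$. Then $V\in\mathcal{C}^2(D)$, $V$ diverges at every boundary point of $D$ and as $z\to+\infty$, and $\int V\,m_\pi(\cdot,d\cdot) < +\infty$ because the kernel charges only finitely many points lying strictly inside $D$. The inequality $L^\pi V \leq C(1+V)$ splits into a diffusion--drift part in $z$, essentially the computation carried out in the proof of Theorem \ref{theowellpos} (condition $(\ref{bandsigmaproperty})$ kills the singular $1/z$ contribution and $(\ref{sigmabarproperty2})$ controls the remainder by $C(1+z^2)$), and a jump part in $r$ involving $V_0(r):=-\log(\overline{r}-r)-\log(r-\underline{r})$. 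The jump part is the delicate one: near $r=\overline{r}-k\delta$ with $k>0$ the increment $V_0(r+k\delta)-V_0(r)$ diverges like $-\log(\overline{r}-k\delta-r)$, but by $(\ref{roverandunderline})$ and $(\ref{derivatefinitepelambda})$, $p(\pi,\cdot,k\delta)$ vanishes at $\overline{r}-k\delta$ with finite left derivative, so $p(\pi,r,k\delta) = O(\overline{r}-k\delta-r)$ and the product $p(\pi,r,k\delta)\,(V_0(r+k\delta)-V_0(r)) = O\bigl((\overline{r}-k\delta-r)|\log(\overline{r}-k\delta-r)|\bigr)$ stays bounded. The symmetric right-derivative condition in $(\ref{derivatefinitepelambda})$ handles downward jumps near $\underline{r}+k\delta$. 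Once (H3) is established, Theorem \ref{CPD} yields existence and uniqueness of the viscosity solution of $(\ref{valeq})$ with terminal datum $\Psi^i$ within the class $(\ref{V-growth})$, and Lemma \ref{recursion} identifies it with $\varphi^i(t_i+\cdot,\pi,\cdot,\cdot)$; the backward induction on $i$ completes the proof.
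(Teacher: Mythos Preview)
Your argument is correct and follows the paper's overall strategy: cast $(\ref{valeq})$ as a PIDE of the form $(\ref{eq-1})$--$(\ref{op})$, verify (H1)--(H4) for each fixed $\pi$, and invoke Theorem~\ref{CPD} together with Lemma~\ref{recursion} and Proposition~\ref{param-wdef}. Your checks of (H1), (H2), (H4), and the $z$-part $V_1(z)=z^2-\ln z$ of the Lyapunov function coincide with the paper's.

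The genuine difference is in the $r$-component of the Lyapunov function for (H3). You take the $\pi$-independent logarithmic barrier $V_0(r)=-\log(\overline r-r)-\log(r-\underline r)$ and control the dangerous jump increments by observing that $(\ref{roverandunderline})$, $(\ref{continuitapelambda})$ and $(\ref{derivatefinitepelambda})$ force $p(\pi,r,k\delta)=O(\overline r-k\delta-r)$ as $r\uparrow\overline r-k\delta$, so that each product $p(\pi,r,k\delta)\bigl(V_0(r+k\delta)-V_0(r)\bigr)$ is bounded above like $s|\log s|$; the remaining terms are nonpositive near $\partial D$. The paper instead constructs, from the same hypotheses, $\pi$-dependent $\mathcal C^1$ majorants $\underline\beta,\overline\beta$ of the jump probabilities and sets
\[
V_0^\pi(r)=\int_r^{\overline r}\frac{ds}{\underline\beta(s)}+\int_{\underline r}^r\frac{ds}{\overline\beta(s)},
\]
after which the jump estimate reduces to the one-line bound $p(\pi,r,-k\delta)/\underline\beta(r-k\delta)\leq 1$ (and symmetrically for upward jumps). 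Your route is more elementary; the paper's makes the role of $(\ref{derivatefinitepelambda})$ more structurally visible and yields a Lyapunov function tailored to the actual jump kernel. One small caveat: the theorem statement fixes the uniqueness class $(\ref{V-growth})$ via the paper's specific $V^\pi$ from $(\ref{V})$--$(\ref{V0})$--$(\ref{V1})$, whereas your proof delivers uniqueness in the class determined by your own $V$; since the two functions share the same $V_1$ in $z$ (which is all that the sublinear bound on $\Psi^i$ needs) and both blow up at the $r$-boundary, this is a cosmetic mismatch rather than a gap.
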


\begin{proof}
In order to adjust to the general formulation of 
\cite{CostantiniPapiD'Ippoliti2012}, we note first of all that $(\ref{valeq})$ can 
be viewed as a simple Partial Integro-Differential 
Equation, namely 
\begin{equation}\begin{array}{l}
\frac {\partial\psi}{\partial t}(t,r,z)+k^{sh}\left(b(r)-z\right)\frac {
\partial\psi}{\partial z}(t,r,z)+\frac 12\overline {\sigma}^2\left
(|r-z|^2\right)z\frac {\partial^2\psi}{\partial z^2}(t,r,z)\\
\qquad\qquad\quad\quad +\int_{(\underline r,\overline r)}\left[\psi\left
(t,r',z\right)-\psi\left(t,r,z\right)\right]\mu (\pi ,r,dr')-z\psi 
(t,r,z)=0,\\
\qquad\qquad\qquad\qquad\qquad\qquad\qquad\qquad\qquad\qquad 0\leq 
t<\Theta ,~r\in (\underline r,\overline r),\,\,\,z>0,\\
\psi (\Theta ,r,z)=\Psi^i(\pi ,r,z),\qquad\qquad\qquad\qquad\qquad 
r\in (\underline r,\overline r),\,\,\,z>0,\end{array}
\label{PIDE}\end{equation}
where 
\begin{equation}\mu (\pi ,r,A):=\lambda (\pi ,r)\sum_{k=-m,\,\,k\neq 
0}^m\1_A(r+\delta k)p(\pi ,r,k\delta ),\qquad A\in {\cal B}\big((\underline 
r,\overline r)\big).\label{PIDE-mu}\end{equation}
Therefore, for each fixed $\pi$, $(\ref{PIDE})$ is of the form \ref{eq-1}-\ref{op} 
with 
\[\begin{array}{ll}
L^{\pi}\psi (t,r,z)=k^{sh}\left(b(r)-z\right)\frac {\partial\psi}{
\partial z}(t,r,z)+\frac 12\overline {\sigma}^2\left(|r-z|^2\right
)z\frac {\partial^2\psi}{\partial z^2}(t,r,z)\\
\qquad\qquad\qquad\qquad\qquad +\int_{(\underline r,\overline r)}\left
[\psi\left(t,r',z\right)-\psi\left(t,r,z\right)\right]\mu (\pi ,r
,dr'),\end{array}
\]
\[g(r,z)=0,\qquad c(r,z)=z.\]

Assumptions \ref{H1},\ref{H2} of theorem \ref{CPD} are satisfied by 
$(\ref{continuitapelambda})$ and 
$(\ref{sigmabarproperty1})$. 

As far as Assumption \ref{H3} is concerned, it is sufficient to find, for 
each fixed $\pi$, 
$V_0^{\pi}\in {\cal C}^2(\underline r,\overline r)$, $V_1^{\pi}\in 
{\cal C}^2((0,\infty ))$ nonnegative and such that 
\begin{equation}\lim_{r\rightarrow\underline r^{+}}V_0^{\pi}(r)=\infty 
,\quad\lim_{r\rightarrow\overline r^{-}}V_0^{\pi}(r)=\infty ,\quad 
L^{\pi}V_0^{\pi}(r,z)\leq C(1+V_0^{\pi}(r)),\label{V0cond}\end{equation}
\begin{equation}\lim_{z\rightarrow 0^{+}}V_1^{\pi}(z)=\infty ,\quad\lim_{
z\rightarrow\infty}V_1^{\pi}(z)=\infty ,\quad L^{\pi}V_1^{\pi}(r,
z)\leq C(1+V_1^{\pi}(z)).\label{V1cond}\end{equation}
Then Assumption \ref{H3} will be verified by 
\begin{equation}V^{\pi}(r,z):=V_0^{\pi}(r)+V_1^{\pi}(z).\label{V}\end{equation}
By the same computations as in Theorem \ref{theowellpos}, 
and by $(\ref{sigmabarproperty2})$,  
$(\ref{bandsigmaproperty})$, we can see that 
\begin{equation}V^{\pi}_1(z):=z^2-\ln(z).\label{V1}\end{equation}
satisfies $(\ref{V1cond})$.
For each fixed $\pi$, $V^{\pi}_0$ can be constructed in the following 
way. By $(\ref{derivatefinitepelambda})$, there exist $\underline {
\beta}\pi=\underline {\beta} ,\overline {\beta}\pi =
\overline {\beta}\in {\cal C}^1([\underline r,\overline r]
)$ such that 
\[\underline {\beta}(\underline r)=0,\quad\underline {\beta}(r)>0\mbox{\rm \ for }
r>\underline r,\quad\underline {\beta}\mbox{\rm \ is nondecreasing}
,\]
\[\underline {\beta}(r)\geq\max_{h=1,..,m}p(\pi ,r+h\delta ,-h\delta 
),\]
and 
\[\overline {\beta}(\overline r)=0,\quad\overline {\beta}(r)>0\mbox{\rm \ for }
r<\,\overline r,\quad\overline {\beta}\mbox{\rm \ is nonincreasing}
,\]
\[\overline {\beta}(r)\geq\max_{h=1,..,m}p(\pi ,r-h\delta ,h\delta ).\]
Setting 
\begin{equation}V^\pi _0(r):=\int_r^{\overline r}\frac 1{\underline {\beta}(s)}\,ds\,+\,
\int_{\underline r}^r\frac 1{\overline {\beta}(s)}\,ds,\label{V0}\end{equation}
we have, for $k=1,...,m$, $r-k\delta\in (\underline r,\overline r
)$, 
\begin{eqnarray*}
&&\lambda (\pi ,r)\left[V^\pi_0(r-k\delta )-V^\pi_0(r)\right]p(\pi ,r,-k\delta 
)\\
&=&\lambda (\pi ,r)\left[\int_{r-k\delta}^r\frac 1{\underline {\beta}
(s)}\,ds\,+\,\int_{r-k\delta}^r\frac 1{\overline {\beta}(s)}\,ds\right
]p(\pi ,r,-k\delta )\\
&\leq&\overline {\lambda}\,\frac {m\delta}{\underline {\beta}(r-k
\delta )}p(\pi ,r,-k\delta )\,+\overline {\lambda}V^\pi_0(r)\\
&\leq&\overline {\lambda}(m\delta +1)(1+V^\pi_0(r)).\end{eqnarray*}
Analogously,  for $k=1,...,m$, $r+k\delta\in (\underline r,\overline 
r)$, 
\begin{eqnarray*}
&&\lambda (\pi ,r)\left[V^\pi_0(r+k\delta )-V^\pi_0(r)\right]p(\pi ,r,k\delta 
)\\
&=&\lambda (\pi ,r)\left[\int_r^{r+k\delta}\frac 1{\underline {\beta}
(s)}\,ds\,+\,\int_r^{r+k\delta}\frac 1{\overline {\beta}(s)}\,ds\right
]p(\pi ,r,k\delta )\\
&\leq&\overline {\lambda}V^\pi_0(r)+\overline {\lambda}\,\frac {m\delta}{\overline {
\beta}(r+k\delta )}p(\pi ,r,k\delta )\,\\
&\leq&\overline {\lambda}(m\delta +1)(1+V^\pi_0(r)).\end{eqnarray*}
Thus $(\ref{V0cond})$ is satisfied. 
 
We now turn to Assumption \ref{H4} of theorem \ref{CPD}. 
The conditions on $c(z):=z$ and $g(z)\equiv 0$ are clearly satisfied 
(note that the lower bound of $c$ in 
\cite{CostantiniPapiD'Ippoliti2012} does not need to be positive). 
Taking the function $l$ of 
\cite{CostantiniPapiD'Ippoliti2012} as $l(q)=q$,  
any continuos terminal value with sublinear growth 
satisfies the conditions of Assumption \ref{H4}. 
Therefore Assumption \ref{H4} is satisfied by Lemmas \ref{param-prop} 
and \ref{B-prop}. 

Since $(R^{\pi}_{r},R^{sh,\pi}_{(r,z)})$ is a solution of the martingale problem for $(L^{\pi},\delta_{(r,z)})$.
The thesis follows from Theorem \ref{CPD}. 
\end{proof}

\section{Numerical Simulation}\label{sectionTheImplementation}

In this section we present a finite difference scheme to solve numerically the pricing problem of an interest-rate financial derivative under our model of Section 2. Let us remind that in recent years a great deal has been done for the numerical approximation of viscosity solutions for second order problems. In particular, we refer the reader to the fundamental paper by Barles and Souganidis \cite{BarlesSouganidis1991} who first showed convergence results for a large class of numerical schemes to the solution of fully nonlinear second order elliptic or parabolic partial differential equations. Moreover we refer to \cite{BrianiNataliniChioma2004} for the extension of their arguments to the class of numerical schemes for integro-differential equations.

Our numerical scheme is applied to the sequence of partial differential equations and their final condition (\ref{valeq})-(\ref{terminal}), without using any artificial condition at the boundary $z=0$. The effectiveness of the method is based on the use of assumption (\ref{bandsigmaproperty}) and is tested on some numerical examples.\\

For each interval $(t_i,t_{i+1})$ and for every fixed value for the inflation rate $\pi$, we calculate the solution of problem (\ref{valeq})-(\ref{terminal}) by the following method. We convert the problem into an initial-value problem letting $\tau=\Theta-t$, hence we define $\psi^{n}_{h,j}$ as the approximate value of the solution $\psi$ at $\Theta-\tau_n$, $\tau_n=n\Delta \tau$, $n=0,\ldots, N$, $r_h=\underline{r}+h\delta$, $h=1,\ldots,H-1$, $z_j=j\Delta z$, $j=0,\ldots,J$, where $\Delta\tau=\Theta /N$, $\Delta z=z_{\max}/J$, $N$, $H$, $J$, being positive integers, such that $H\leq (\overline{r}-\underline{r})/ \delta$, and $z_{\max}>>0$. Therefore, the numerical domain of the problem is $[0,\Theta]\times [0,z_{\max} ]$. Given the discrete nature of the ECB rate, we propose an approximation of the solution at $r_h$, where the increments are proportional to the minimum jump size $\delta$. In fact, this choice allows to deal effectively with the non-local term in the equation (\ref{valeq}), using linear algebraic equations.

At a point $(\tau_n, r_h,z_j)$, $z_j>0$, the partial differential equation in (\ref{valeq}) can be discretized by the following second order approximation:

\begin{eqnarray}\label{num.schema.int}
&&\frac{\psi^{n+1}_{h,j}-\psi^{n}_{h,j}}{\Delta \tau}=\frac{k^{sh}(b(r_h)-z_j)}{4\Delta z}\left(\psi^{n+1}_{h,j+1}-\psi^{n+1}_{h,j-1}+\psi^{n}_{h,j+1}-\psi^{n}_{h,j-1}\right)+\nonumber\\
&&\frac{1}{4\Delta z^2}z_j\overline{\sigma}^2(|r_h-z_j|^2)\left(\psi^{n+1}_{h,j+1}-2\psi^{n+1}_{h,j}+\psi^{n+1}_{h,j-1}+\psi^{n}_{h,j+1}-2\psi^{n}_{h,j}+\psi^{n}_{h,j-1}\right)\nonumber\\
&&+\lambda(\pi,r_h)\sum_{k=-\min(m,h-1)}^{\min(m,H-h-1)}\left[\psi^{n}_{h+k,j}-\psi^{n}_{h,j}\right]p(\pi,r_h,k\delta)-\frac{z_j}{2}\left(\psi^{n+1}_{h,j}+\psi^{n}_{h,j}\right),
\end{eqnarray}
for any $h=1,\ldots,H-1$, $j=1,\ldots J-1$, $n=0,\ldots, N-1$. At the boundary $z=0$, the partial differential equation 
in (\ref{valeq}) becomes a hyperbolic equation with respect to $z$, with a nonlocal term:
\begin{eqnarray}\label{hyp.eq}
\frac{\partial \psi}{\partial t}(t,r,z)+k^{sh}b(r)\frac{\partial \psi}{\partial z}(t,r,z)+\lambda(\pi,r)\sum_{k=-m}^m\left[\psi(t,
r+k\delta,z)-\psi(t,r,z)\right]p(\pi,r,k\delta)=0\nonumber\\
\end{eqnarray}
Since $b(r)>0$, the value of $\psi$ on the boundary $z=0$ should be determined by the value of $\psi$ inside the domain. Hence, we can approximate the partial differential equation by the following scheme:
\begin{eqnarray}\label{num.s.b}
&&\frac{\psi^{n+1}_{h,0}-\psi^{n}_{h,0}}{\Delta\tau}=\frac{k^{sh}b(r_h)}{4\Delta z}\left(-\psi^{n+1}_{h,2}+4\psi^{n+1}_{h,1}-3\psi^{n+1}_{h,0}-\psi^{n}_{h,2}+4\psi^{n}_{h,1}-3\psi^{n}_{h,0}\right)\nonumber\\
&&+\lambda(\pi,r_h)\sum_{k=-\min(m,h-1)}^{\min(m,H-h-1)}\left[\psi^{n}_{h+k,0}-\psi^{n}_{h,0}\right]p(\pi,r_h,k\delta),
\end{eqnarray}
for any $h=1,\ldots,H-1$, $n=0,\ldots, N-1$. Here $\partial \psi/\partial z$ is discretized by a one-side second order scheme in order for all the node points involved to be in the computational domain. Moreover we assign the initial
datum at $\psi^{0}_{h,j}=\psi(\Theta,r_h,z_j)=\Psi^i(\pi,r_h,z_j)$, for any $j=0,\ldots,J$. At the boundary $z=z_{\max}$ we adopt 
the Neumann boundary condition $\psi^{n}_{h,J}=\psi^{n}_{h,J-1}$, for any $n=0,\ldots,N$.
When $\psi^{n}_{h,j}$, $h=1,\ldots,H-1$, $j=0,\ldots,J$ are known from (\ref{num.schema.int}) and (\ref{num.s.b}), we can determine $\psi^{n+1}_{h,j}$, for any $h$ and $j$. Therefore, we can perform this procedure for $n=0,\ldots,N-1$ successively and finally find $\psi^{N}_{h,j}$, for any $h$ and $j$. Since truncation errors are second order everywhere, at least for a smooth enough solution it may be expected that the error is $O(\Delta \tau^2, \Delta z^2)$, see ~\cite{Marcozzi2001} and ~\cite{ZhuLi2003}.  In order to evaluate the numerical solution at the time step $n+1$, we rewrite equations (\ref{num.schema.int}) and (\ref{num.s.b}) throughout using the following quantities:
\begin{eqnarray}\label{coeff.nu}
\nu_{h,j}=\frac{k^{sh}}{4}(b(r_h)-z_j)\frac{\Delta \tau}{\Delta z},\qquad\qquad h=1,\ldots, H-1,\;\;\;\ j=0,\ldots J-1, 
\end{eqnarray}
\begin{eqnarray}\label{coeff.nu}
\xi_{h,j}&=&\frac{z_j}{4}\overline{\sigma}^2(|r_h-z_j|^2)\frac{\Delta \tau}{(\Delta z)^2},\qquad\qquad h=1,\ldots, H-1,\;\;\;\ j=1,\ldots J-1, \\
\xi_{h,0}&=& \frac{3}{4}k^{sh}b(r_h)\frac{\Delta \tau}{\Delta z}.
\end{eqnarray}
\begin{eqnarray}\label{coeff.eta}
\eta_{h,j}=\xi_{h,j}+\nu_{h,j},\qquad\;\; \theta_{h,j}=\nu_{h,j}-\xi_{h,j},\qquad\;\; w_{h,j}=2\xi_{h,j}+\frac{\Delta\tau}{2}z_j+1.
\end{eqnarray}
Moreover, for every $n=0,\ldots, N-1$, $h=1,\ldots, H-1$, $j=1,\ldots, J-1$, we define
\begin{eqnarray}\label{coeff.Q}
Q^{n}_{h,j}&=&\psi^n_{h,j}+\nu_{h,j}\left(\psi^n_{h,j+1}-\psi^n_{h,j-1}\right)+\xi_{h,j}\left(\psi^n_{h,j+1}-2\psi^n_{h,j}+\psi^n_{h,j-1}\right)\nonumber\\
&&+\Delta\tau \lambda(\pi,r_h)\sum_{k=-\min(m,h)}^{\min(m,H-h)}\left[\psi^n_{h+k,j}-\psi^n_{h,j}
\right]p(\pi,r_h,k\delta)-\frac{z_j\Delta \tau}{2}\psi^n_{h,j},\\ 
Q^{n}_{h,0}&=&\psi^n_{h,0}+\nu_{h,0}\left(-\psi^n_{h,2}+4\psi^n_{h,1}-3\psi^n_{h,0}\right)+\nonumber\\
&&+\Delta\tau \lambda(\pi,r_h)\sum_{k=-\min(m,h)}^{\min(m,H-h)}\left[\psi^n_{h+k,0}-\psi^n_{h,0}
\right]p(\pi,r_h,k\delta).
\end{eqnarray}
\begin{eqnarray}\label{matrix}
\!A_h\!=\!\left[\begin{array}{cccccc}
1+\xi_{h,0} & -4 \nu_{h,0} & \nu_{h,0} & 0 & \cdots & 0\\
\theta_{h,1}& w_{h,1} & -\eta_{h,1} & 0 & \cdots & 0 \\
0 & \theta_{h,2} & w_{h,2} & -\eta_{h,2}& \cdots & 0\\
\vdots &  \vdots &  \vdots &  \vdots &  \vdots &  \vdots  \\
0 & 0 & 0 & 0 & \theta_{h,J-1} & (w_{h,J-1} -\eta_{h,J-1})
\end{array}\right]\;
K_h^n=\left[\begin{array}{c}
Q^{n}_{h,0}\\
Q^{n}_{h,1}\\
\vdots\\
Q^{n}_{h,J-1}\\
\end{array}\right],
\end{eqnarray}
$A_h$ is a $J\times J$ matrix independent of $\psi^{n+1}_{h,.}$ and $\psi^{n}_{h,.}$, whereas $K_h^n\in \mathbb{R}^J$ depends on the values of the numerical solution at the time step $n$. Therefore, keeping the terms 
which involve $\psi^{n+1}_{h,j}$, for $j=0,\ldots,J-1$, on the left-hand side of equation (\ref{num.schema.int}), (\ref{num.s.b})  and bringing all the other terms on the right-hand side, we easily obtain the following linear system:
\begin{eqnarray}\label{LS.1}
A_h \psi^{n+1}_h=K_h^n
\end{eqnarray}
for the computation of the numerical solution at the time step $n+1$, given by
\begin{eqnarray}\label{LS.2}
\psi^{n+1}_{h}=\left[\begin{array}{c}
\psi^{n+1}_{h,0}\\
\psi^{n+1}_{h,1}\\
\vdots\\
\psi^{n+1}_{h,J-1}\\
\end{array}\right],
\end{eqnarray}
for any $h=1,\ldots,H-1$. We observe that the coefficients in (\ref{coeff.eta}) satisfy
\begin{eqnarray}\label{LS.3}
w_{h,j}>|\theta_{h,j}|+\eta_{h,j},\qquad\qquad \mbox{\rm for all $j=1,\ldots J-1$},
\end{eqnarray}
and the same holds for the coefficients in the first row of $A_h$. Therefore $A_h$ is strictly diagonally dominant, implying that $A_h$ is invertible; moreover, since $w_{h,j}>1$, for any $j=1,\ldots,J-1$, 
the real parts of its eigenvalues are positive. In fact, we recall that these results follow from the well known Gershgorin's circle theorem. Therefore system (\ref{LS.1}) admits a unique solution.\\
For each discretized value $\pi$ of the observed inflation rate 
at time $t_i$, the numerical procedure allows to obtain 
$\psi^N_{h,j}=\psi^N_{h,j}(\pi )$, i.e. the approximate value of $
\varphi^i(t_i,\pi ,r_h,z_j)$, 
for any $h=1,\ldots ,H-1$, $j=0,\ldots ,J$, from the initial datum 
$\Psi^i$ evaluated at $(\pi ,r_{h'},z_{j'})$, $h'=1,\ldots ,H-1$, $
j'=0,\ldots ,J$. For 
$i=M-1,$ for each discretized value of $\pi$ and for each 
$h'=1,\ldots ,H-1$, $j'=0,\ldots ,J$, $\Psi^{M-1}(\pi ,r_{h'},z_{
j'})$ is obtained 
from the payoff $\Phi$ by applying a standard quadrature 
method for the evaluation of the integral operator $B$ 
defined in $(\ref{B})$. For $i<M-1$, $\Psi^i(\pi ,r_{h'},z_{j'})$ is 
obtained analogously from the approximate values of 
$\varphi^{i+1}(t_{i+1},\pi',r_{h'},z_{j'})$, where $\pi'$ ranges over all discretized 
values of the inflation rate (the grid for the variable $u$ 
in the integral operator $B$ can be chosen so that 
$\gamma (\pi ,r_{h'},z_{j'})+u$ is corresponds to a discretized value of the inflation 
rate). 

In the following section we present a numerical experiment using the finite difference scheme  
defined above. We remark that, in the case that the payoff function, the intensity function $\lambda$ and the probability for jump occurrence $p$ are independent of $\pi$, then the solution to the pricing problem is independent of $\pi$ and the application of the operator $B$ is not needed; the numerical scheme (\ref{num.schema.int})-(\ref{num.s.b}) can be applied once with $\Theta$ replaced by $M\Theta$.

\subsection{The Numerical Tests}\label{NS1}

In this section we present a numerical experiment related to a pricing problem with a specific payoff function (see \ref{payoff.test.1} below), using the numerical scheme described above. The diffusion $\overline{\sigma}$, the jump-intensity $\lambda$ are chosen as constant 
functions. The same holds for the probability of jumps $p(\pi,r,k\delta)=1/(2m+1)$, if $r+k\delta\in (\underline{r},\overline{r})$, zero otherwise, and all the interest rates as well as the inflation rate are 
expressed as a percentage. The unit time corresponds to one year. The remaning coefficients in the model are fixed as in Table \ref{T_parameter}.
\begin{table}[]
{\small
  \begin{center}
     \begin{tabular}{|c|c|c|c|c|} \hline
$\overline{\sigma}=0.23$ & $\lambda=10$ & $k^{sh}=2$ & $\underline{r}=0.05 $ & $\overline{r}=4.25$\\
\hline
$b_0=0$ & $b_1=1$ & $\delta=0.25$ & $m=4$ & $\Theta=0.25$ \\
\hline
$\pi=1.52$ & $\alpha=0.8$ & $\beta=1.2$ & $k_\pi=3$ & $\pi^{\star}=2$ \\
\hline
$v=0.1$ & $M=4$ & $\beta=1.2$ & $k_\pi=3$ & $\pi^{\star}=2$ \\
\hline
\end{tabular}
\end{center}}
\caption{Model parameters for numerical tests }\label{T_parameter}
\end{table}
\noindent 
We observe that $\underline{r}$ and $\overline{r}$ are fixed as the minimum and the maximum
historical value of the ECB rate observed between July 2008 to September 2014, respectively. 
Since $\Theta=0.25$, the inflation rate is observed once every three months, and $\pi^{\star}$ is fixed as the target inflation rate established in the Eurozone. 

\noindent 
We conduct our numerical test in the case of the inflation-indexed swaps (IIS), where the standard no-arbitrage pricing theory implies that the value at time $t<T$ of the inflation-indexed leg of the IIS can be expressed as
\begin{eqnarray}\label{payoff.test.1}
{\cal N} \mathbb{E}\left[ e^{-\int_t^T R^{sh}(s) ds}\left(\frac{\Pi(T)}{\Pi_0}-1\right)\Bigg{|} {\cal F}_t\right],
\end{eqnarray}
where $T$ is the maturity date, ${\cal N}$ is the nominal value of the contract, $\Pi_0$ is a reference value for the inflation rate. Hence the associated payoff function to (\ref{payoff.test.1}) is $\Phi(\pi,r,z)={\cal N}(\pi/\pi_0-1)$. In particular we refer the reader to \cite{Mercurio2005} for a detailed description of these kind of inflation-linked instruments. Here, for the sake of simplicity, we set ${\cal N}=\Pi_0=1$. 

\noindent
We present the error analysis taking into the order $\gamma$ of the numerical error under the
discrete $l^1$-norm and $l^\infty$-norm,
\begin{equation}\label{gamma}
\gamma_{1,\infty}=\log_2\left(\displaystyle\frac{e^{N,J}_{1,\infty}}{e^{2N,2J}_{1,\infty}}\right),
\end{equation}
where the relative $l^1$ and $l^\infty $ errors are calculated as follows
respectively as follows:
\begin{equation}\label{l1_error}
e^{N,J}_{1}=\max_h e^{N,J}_{1,h},\qquad \qquad
e^{N,J}_{1,h}=\displaystyle\frac{\displaystyle\sum_{j=0}^J |\varphi_{h,j}^{N,J}-\varphi_{h,2j}^{2N,2J}|}{\displaystyle\sum_{j=0}^J|\varphi_{h,2j}^{2N,2J}|},
\end{equation}
\begin{equation}\label{l_inf_error}
e^{N,J}_{\infty}=\max_h e^{N,J}_{\infty,h},\qquad \qquad
e^{N,J}_{\infty,h}=\displaystyle\frac{\displaystyle\max_{j=0}^J |\varphi_{h,j}^{N}-\varphi_{h,2j}^{2N,2J}|}{\displaystyle\max_{j=0}^J|\varphi_{h,2j}^{2N,2J}|}.
\end{equation}
Here $\varphi_{h,j}^{N,J}$ corresponds to the numerical solution approximating the value of the pricing function $\varphi$ at $r=r_h$, $z=z_j$, $t=0$, when the number of time steps in each iteration of the finite difference scheme (\ref{num.schema.int})-(\ref{num.s.b}) is equal to $N$ and the number of points in the interest rate grid is $J$. The total number of values for the ECB rate in the grid (=$H$) is given by the integer part of $(\overline{r}-\underline{r})/\delta$. We observe that, in the computation of the errors we compare the solution $\varphi_{h,j}^{N,J}$ with $\varphi_{h,2j}^{2N,2J}$ which corresponds to doubling the number of points both in the interest rate grid and in the time grid. Note that the $j$th interest rate value, when the number of points in the grid is $J$, coincides with the $(2j)$th interest rate value in the grid, when the total number of point is $2J$. Hence, we expect their difference to be sufficiently small as $N$ and $J$ increase. Moreover in Figure \ref{Error.fig} is showed the behavior of the $l^1$ error 
$\{e_1^{N,J}(h)\}_h$ and the $l^\infty$ error $\{e_\infty^{N,J}(h)\}_h$ as a function of the ECB rate $\{r_h\}_h$. The results of the error analysis are summarized in Table \ref{T} and it comes out that the scheme is of first order accuracy. The results we obtained are quite satisfactory and demonstrate the convergence of the numerical scheme. However, we plan to investigate with greater accuracy the convergence in a future work in preparation. Furthermore, the calibration to market data related to inflation-indexed derivatives is an interesting subject for future research.

\begin{table}[ht]
  \begin{center}
     \begin{tabular}{|c|c|c|c|c|} \hline
\multicolumn{5}{|c|}{$T=1$} \\ \hline
$N\times J$ & error $l^1$ & order $l^1$ & error $l^\infty$ & order $l^\infty$ \\
\hline 30$\times$30 & 0.0117 & 1.0298& 0.0116 & 1.2279\\ \hline
50$\times$50 &  0.0069& 1.0156 & 0.0063 & 1.3085\\
\hline 70$\times$70 &  0.0049& 1.0102 & 0.0041 & 1.3547\\ 
\hline 100$\times$100 & 0.0034 & 1.0067 & 0.0026 & 1.3454\\ 
\hline 150$\times$150 &  0.0023& 1.0042 & 0.0015 & 1.1660\\ 
\hline
\multicolumn{5}{|c|}{$T=2$} \\ \hline
$N\times J$ & error $l^1$ & order $l^1$ & error $l^\infty$ & order $l^\infty$ \\
\hline 30$\times$30 & 0.0419 & 1.0490& 0.0266 & 1.0832\\ \hline
50$\times$50 & 0.0244 & 1.0284 & 0.0151 & 1.0499\\
\hline 70$\times$70 &  0.0172& 1.0199& 0.0106& 1.0357\\ 
\hline 100$\times$100 &  0.0119& 1.0137& 0.0073& 1.0250\\ 
\hline 150$\times$150 &  0.0079& 1.0091& 0.0048 & 1.0166\\ 
\hline
\end{tabular}
 \end{center}
\caption{\small Numerical errors and order of accuracy of the scheme
with $z_{\max}=7$ (equivalent to the interest rate value of 7\%) for the maturity dates of 1 and 2 years, respectively. $N\times J$ stands for the mesh size.}\label{T}
\end{table}

\begin{figure}[h]
\vspace*{-1cm}
\begin{center}
 \includegraphics[width=12.5cm]{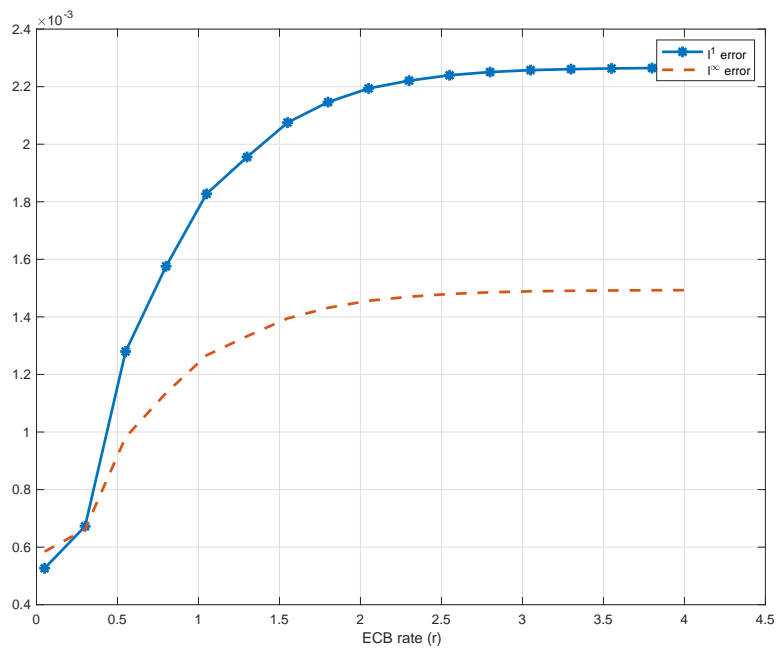}
\end{center}
\caption{$l^1$ error and $l^\infty$ error as a function of the ECB rate, for $N=J=50$, $T=2$.}\label{Error.fig}
\end{figure}

\vspace*{6.5cm}




\begin{thebibliography}{99}

\bibitem[BS91]{BarlesSouganidis1991}
BARLES G., SOUGANIDIS P.E.
\newblock {Convergence of approximation schemes for fully nonlinear equations,} 
\newblock {\em Asymptotic Analysis}, 4: 271-283, 1991. 

\bibitem[BLN04]{BrianiNataliniChioma2004}
BRIANI M., LA CHIOMA C., NATALINI R.
\newblock{Convergence of numerical schemes for viscosity solutions to integro-differential degenerate parabolic problems arising in financial theory,}
\newblock {\em Numerische Mathematik}, 98(4): 607-646, 2004.

\bibitem[CPD12]{CostantiniPapiD'Ippoliti2012}
COSTANTINI C., D'IPPOLITI F., PAPI M.
\newblock {Singular risk-neutral valuation equations,}
\newblock {\em Finance and Stochastics}, 16(2):249--274, 2012.

\bibitem[EK86]{EthierKurtz}
ETHIER S.~N., KURTZ T.~G.
\newblock {Markov Processes, Characterization and Convergence,}
\newblock {\em John Wiley \& Sons}, Inc., New York, 1986.

\bibitem[HPR]{HaubricPennacchiRitchken}
HAUBRIC J., PENNACCHI G., RITCHKEN P. 
\newblock {Expectations, Real Rates, and Risk Premia: Evidence from Inflation Swaps}, 
\newblock {\em The Review of Financial Studies}, 25(5): 1588--1629, 2012.  

\bibitem[HHY]{HoHuangYildirim}
HO H.~W., HUANG H.~H.,YILDIRIM Y.
\newblock {Affine model of inflation-indexed derivatives and inflation risk premium}, 
\newblock {\em European Journal of Operational Research}, 235: 159-169, 2014. 

\bibitem[HUG98]{Hughston1998}
HUGHSTON L.~P.
\newblock {Inflation derivatives},
\newblock {\em Working Paper}, 1998.

\bibitem[HM08]{HughstonMacrina}
HUGHSTON L.~P., MACRINA A. 
\newblock {Information, Inflation and Interest, Advances in Mathematics of Finance},
Banach Center Publications, 83, Institute of Mathematics Polish Academy of Sciences Warszawa, 2008.  

\bibitem[IW81]{IkedaWatanabe}
IKEDA N.,  WATANABE S.
\newblock {\em Stochastic Differential Equations and Diffusion Processes},
\newblock {North-Holland Publishing Company}, Amsterdam Oxford New York, 1981.

\bibitem[JY03]{JarrowYildirim2003}
JARROW R., YILDIRIM Y.
\newblock {Pricing treasury inflation protected securities and related derivatives using hjm model},
\newblock {\em Journal of Financial and Quantitative Analysis}, 38 (2): 409--430, 2003.
 
\bibitem[KP91]{KurtzProtter}
KURTZ T.~G., PROTTER P.
\newblock  {Weal limit theorems for stochastic integrals and stochastic differential equations},
\newblock {\em Annals of probability}, 19 (3): 1035--1070, 1991.

\bibitem[MCC01]{Marcozzi2001}
MARCOZZI M.~D., CHOI S. and CHEN C.~S. 
\newblock {On the use of boundary conditions for variational 
formulations arising in financial mathematics},
\newblock {\em Appl. Math. Comp.}, 124: 197--214, 2001.

\bibitem[M05]{Mercurio2005}
MERCURIO F.
\newblock {Pricing inflation-indexed derivatives}, 
\newblock {\em Quantitative Finance}, 5 (3), 289--302, 2005. 

\bibitem[SGVBO]{Singorealtri}
SINGOR S.~N. , GRZELAK L.~A., VAN BRAGT D.~D.~B., OOSTERLEE C.~W., 
\newblock {Pricing inflation products with stochastic volatility and stochastic interest rates}, \newblock {\em Insurance: Mathematics and Economics}, 52, 286-–299, 2013.   

\bibitem[W17]{Waldenberger}
WALDENBERGER S.
\newblock {The affine inflation market models},
\newblock {\em Applied Mathematical Finance}, 24 (4), 281--301, 2017. 

\bibitem[ZL03]{ZhuLi2003}
ZHUY Y.~L., LI J.,
\newblock{Multi-factor financial derivatives on finite domain}, 
\newblock {\em Communications in Mathematical Sciences}, 1(2), 343--359, 2017.

\end{thebibliography}
\end{document}